\documentclass[pdflatex,sn-mathphys-num]{sn-jnl}% Math and Physical Sciences Numbered Reference Style
%%\documentclass[pdflatex,sn-mathphys-ay]{sn-jnl}% Math and Physical Sciences Author Year Reference Style
%%\documentclass[pdflatex,sn-aps]{sn-jnl}% American Physical Society (APS) Reference Style
%%\documentclass[pdflatex,sn-vancouver-num]{sn-jnl}% Vancouver Numbered Reference Style
%%\documentclass[pdflatex,sn-vancouver-ay]{sn-jnl}% Vancouver Author Year Reference Style
%%\documentclass[pdflatex,sn-apa]{sn-jnl}% APA Reference Style
%%\documentclass[pdflatex,sn-chicago]{sn-jnl}% Chicago-based Humanities Reference Style

% \usepackage[a4paper, left=2.2cm, right=2.2cm, top=3cm, bottom=3cm]{geometry}

%%%% Standard Packages
%%<additional latex packages if required can be included here>

\usepackage{graphicx}%
\usepackage{multirow}%
\usepackage{amsmath,amssymb,amsfonts}%
\usepackage{amsthm}%
\usepackage{mathrsfs}%
\usepackage[title]{appendix}%
\usepackage{xcolor}%
\usepackage{textcomp}%
\usepackage{manyfoot}%
\usepackage{booktabs}%
\usepackage{algorithm}%
\usepackage{algorithmicx}%
\usepackage{algpseudocode}%
\usepackage{listings}%
%%%%
%==================================================Text Packages=============================
\usepackage{xspace}
\usepackage{fancyhdr}
\usepackage{tcolorbox}
\usepackage{float}
\usepackage{hhline}         % double line in tables
\usepackage{colortbl}       % lovely colours
\usepackage{tabularray}     % fancy tables
\usepackage{dirtytalk}      % quotes

\usepackage{enumitem}       % proper alignment of bullet points even if they are changed
%===============================================Math Packages============================
\usepackage{thmtools}
\usepackage{amssymb}
\usepackage{graphicx}
\usepackage{booktabs}
\usepackage{pifont}% http://ctan.org/pkg/pifont
\newcommand{\xmark}{\text{\ding{55}}}%
\usepackage{cancel}
\usepackage{mathtools}
\usepackage{cleveref}
\usepackage{tabularx}
% \usepackage[table]{xcolor}

%%%%%%%%%%%%%%%%%%%%%%%%%% Author Packages and Macros %%%%%%%%%%%%%%%%%
%===============================================Styling Stuff============================
\newcommand{\defstyle}[1]{\emph{#1}}
\newcommand{\ie}{i.\,e., }
\newcommand{\y}{\ensuremath{\checkmark}}
\newcommand{\n}{\textcolor{red}{\xmark}}

\newcommand{\BP}{\ensuremath{BP}\xspace}

\newcommand{\SC}{\ensuremath{SC}\xspace}
\newcommand{\RP}{\ensuremath{RP}\xspace}
\newcommand{\RV}{\ensuremath{RV}\xspace}

\newcommand{\SV}{\ensuremath{SV}\xspace}
\newcommand{\GETCHA}{\ensuremath{Sm}\xspace}
\newcommand{\SM}{\ensuremath{Sm}\xspace}

\newcommand{\Strength}{\mathit{strength}\xspace}

\newcommand{\margin}[1]{\ensuremath{m(#1)}\xspace}
\newcommand{\marginExt}[2]{\ensuremath{m_{#2}(#1)}\xspace}

\newcommand{\margingraph}{\ensuremath{\mathcal{M}}\xspace}

\newcommand{\margingraphRV}{\ensuremath{\margingraph^{RV}}\xspace}
\newcommand{\margingraphRVExt}[1]{\ensuremath{\margingraph^{RV}(#1)}\xspace}
%============================================Election=================================
\newcommand{\Pref}{\ensuremath{\mathbf{P}}\xspace}
 %$\mathcal{R}^n$
\newcommand{\SCF}{\ensuremath{F}\xspace}
\newcommand{\pref}{\ensuremath{\succ}\xspace}

%============================================Style=====================================

%============================================To Be Deleted=============================

% \newcommand{\markus}[1]{{}}

%%%%%=============================================================================%%%%
%%%%  Remarks: This template is provided to aid authors with the preparation
%%%%  of original research articles intended for submission to journals published 
%%%%  by Springer Nature. The guidance has been prepared in partnership with 
%%%%  production teams to conform to Springer Nature technical requirements. 
%%%%  Editorial and presentation requirements differ among journal portfolios and 
%%%%  research disciplines. You may find sections in this template are irrelevant 
%%%%  to your work and are empowered to omit any such section if allowed by the 
%%%%  journal you intend to submit to. The submission guidelines and policies 
%%%%  of the journal take precedence. A detailed User Manual is available in the 
%%%%  template package for technical guidance.
%%%%%=============================================================================%%%%

%% as per the requirement new theorem styles can be included as shown below
\theoremstyle{thmstyleone}%
\newtheorem{theorem}{Theorem}[section]% meant for sectionwise numbers
%% optional argument [theorem] produces theorem numbering sequence instead of independent numbers for Proposition
\newtheorem{proposition}[theorem]{Proposition}% 
\usepackage{thm-restate}
\newtheorem{example}{Example}

\newtheorem*{lemma*}{Lemma}

\newtheorem{observation}[theorem]{Observation}
\newtheorem*{observation*}{Observation}

\theoremstyle{thmstyletwo}%

\theoremstyle{thmstylethree}%
\newtheorem{definition}[theorem]{Definition}

\usepackage{thm-restate}
\newcommand{\restatemarker}{\textup{($\star$)} \ignorespaces}

\raggedbottom
%%\unnumbered% uncomment this for unnumbered level heads

\begin{document}

\title[Article Title]{The River Method}

%%=============================================================%%
%% GivenName	-> \fnm{Joergen W.}
%% Particle	-> \spfx{van der} -> surname prefix
%% FamilyName	-> \sur{Ploeg}
%% Suffix	-> \sfx{IV}
%% \author*[1,2]{\fnm{Joergen W.} \spfx{van der} \sur{Ploeg} 
%%  \sfx{IV}}\email{iauthor@gmail.com}
%%=============================================================%%

\author*[1]{\fnm{Michelle} \sur{Döring}}\email{michelle.doering@hpi.de}

\author[2]{\fnm{Markus} \sur{Brill}}\email{markus.brill@warwick.ac.uk}
%\equalcont{These authors contributed equally to this work.}

\author[3]{\fnm{Jobst} \sur{Heitzig}}\email{jobst.heitzig@pik-potsdam.de}
%\equalcont{These authors contributed equally to this work.}

\affil*[1]{\orgdiv{Algorithm Engineering}, \orgname{Hasso Plattner Institut}, \orgaddress{
    %\street{Prof.-Dr.-Helmert-Straße 2-3}, 
    \city{Potsdam}, %\postcode{14482}, \state{Brandenburg}, 
    \country{Germany}}}

\affil[2]{\orgdiv{Department of Computer Science}, \orgname{University of Warwick}, 
    \city{Coventry}, %\postcode{CV4 7AL}, 
    \country{United Kingdom}}

\affil[3]{\orgdiv{FutureLab on Game Theory \& Networks of Interacting Agents, Complexity Science Department}, \orgname{Potsdam Institute for Climate Impact Research}, \orgaddress{
    %\street{Telegrafenberg A 31}, 
    \city{Potsdam}, %\postcode{14473}, \state{Brandenburg}, 
    \country{Germany}}}

%%==================================%%
%% Sample for unstructured abstract %%
%%==================================%%

\abstract{We introduce \textit{River}, a novel Condorcet-consistent voting method that is based on pairwise majority margins and can be seen as a simplified variation of Tideman's \textit{Ranked Pairs} method. River is simple to explain, simple to compute even ``by hand,'' and gives rise to an easy-to-interpret certificate in the form of a directed tree. Like Ranked Pairs and Schulze's \textit{Beat Path} method, River is a refinement of the \textit{Split Cycle} method and shares with those many desirable properties, including independence of clones. Unlike the other three methods, River satisfies a strong form of resistance to agenda-manipulation that is known as \textit{independence of Pareto-dominated alternatives.}}

\keywords{Voting Rules, Pairwise Preferences, Immune Alternatives, Ranked Pairs}

%%\pacs[JEL Classification]{D8, H51}

%%\pacs[MSC Classification]{35A01, 65L10, 65L12, 65L20, 65L70}

\maketitle

\section{Introduction}
The task of making a collective decision on the basis of individual rankings is fundamental to social choice theory \cite{arrow_handbook_2010,brandt_handbook_2016} and has a wide range of applications in artificial intelligence \cite{furnkranz2003pairwise,askell2021general, kopf2024openassistant,mishra2023ai}. For instance, K\"opf et. al. \cite{kopf2024openassistant} have recently employed a rank-aggregation approach to align large language models (LLMs) with human preferences. 

We use the framework of \textit{voting theory} \cite{zwicker2016introduction} and interpret rankings as preference orders given by a set of \textit{voters} over a set of \textit{alternatives}. A key principle in collective decision making is \textit{majority rule}---the idea that the decision should follow what is seen as ``the will of the majority'' \cite{May52a}.
In the simple case of only two alternatives, majority rule is unambiguous and chooses the alternative that is ranked first by more than half of the voters.
This principle is used by many common decision methods which are based on \textit{pairwise comparisons} between the alternatives.

In real-world elections, a \textit{Condorcet winner} often emerges---an alternative that defeats all others in pairwise comparison. Often, this alternative is considered to be the most suitable choice for the election winner.
In the absence of a Condorcet winner, each alternative faces at least one majority defeat. 
To then decide on a winner, one can encode the pairwise majority comparisons as a graph, with each alternative represented by a vertex, and an edge $(y,x)$ indicating that $y$ defeats $x$. This forms a \textit{tournament graph}, and a selection process based on this is known as a \textit{tournament solution} \cite{brandt_tournament_2016,Lasl97a}.
Here, majority rule is embodied by \textit{``beatpaths''}:
For any chosen winner $x$, if there is a defeat $y\to x$, there should be a path of defeats leading from $x$ back to $y$:  $x\to z_1\to\cdots\to z_k=y$.
These alternatives form the \textit{Smith set} (a.k.a.~\textit{GETCHA} or \textit{top cycle}) \cite{schwartz_logic_1986,Good71a,Smit73a}. 

While tournament theory extends this concept by studying several refinements of the Smith set based solely on pairwise majority defeats, another strand of research focuses on the strength of these majority defeats, measured by the numerical majority margin (the number of voters ranking $x$ over $y$ minus the number of voters with opposite rankings) \cite{fischer_weighted_2016}.
The tournament graph with edges weighted by this margin is called the \textit{margin graph}.
% This leads to an integer-valued, antisymmetric weighted graph, called the \textit{margin graph}. % and used to filter the alternatives further or even completely decide the winner. 

A natural extension of the Smith set from the tournament graph to the margin graph is that for any defeat $y\to x$ of a winner $x$, there should be a ``rebutting'' beatpath from $x$ to $y$, with each defeat in the path being at least as strong as $y\to x$. 
This allows defending the choice of $x$ against claims of the form ``a majority ranks $y$ over $x$'' by pointing to a sequence of equally strong (or stronger) claims of the same form leading back to $y$.
% This allows defending the choice of $x$ against claims like ``a majority prefers $y$ over $x$'' by pointing out that if such a claim were valid, there would be a sequence of equally strong (or stronger) claims of the same type leading back to $y$. 
An alternative fulfilling this property is called \textit{immune}.
Since at least one immune alternative always exists in every election, this approach can be seen as a natural way to operationalize the concept of majority rule \cite{heitzig2002social,dung1995acceptability}. 
The corresponding voting method, choosing all immune alternatives, is called \textit{Split Cycle} \cite{holliday2020split}. 
This method has many appealing properties but it often chooses multiple alternatives as the  winner.
There exist several popular methods that always choose a subset of immune alternatives and are therefore refinements of Split Cycle, most notably \textit{Ranked Pairs} \cite{tideman1987independence} and \textit{Beat Path} \cite{schulze2011new}. Both typically select a single winner, except in rare cases of ties, and each satisfies a distinct set of desirable properties.
The same is true for \textit{Stable Voting} \cite{holliday2023stable}, the most recently introduced refinement of Split Cycle.

All four methods---Split Cycle, Ranked Pairs, Beat Path, and Stable Voting---suffer from a weakness related to agenda manipulation: introducing a new alternative $z$ might alter the winning set even if $z$ is \textit{Pareto-dominated} by an existing alternative $y$ (i.e., all voters rank $y$ over $z$). 
Formally, these methods violate \textit{independence of Pareto-dominated alternatives (IPDA)}. 
This property goes beyond the standard notion of  Pareto efficiency (stating that Pareto-dominated alternatives should not be selected) and requires that Pareto-dominated alternatives should not influence the outcome at all. 
Without IPDA, it becomes possible to propose additional, Pareto-dominated alternatives in order to gain an advantage or disturb the voting process. 
As a result, IPDA is often considered a desirable property in social choice theory \cite{fishburn_theory_1973,Rich78c,Chin96a,ChSh98a,GLS19a,Oztu20a,BrPe22a}. Recently, \cite{GrCB23a} explored IPDA in the context of decision making under moral uncertainty.

\subsection*{Our contribution}
In this paper, we introduce the novel\footnote{The method was first described by Jobst Heitzig in a 2004 mailing list post \cite{river2004mailinglist}.} social choice function \textit{River}, a resolute refinement of Split Cycle that satisfies independence of Pareto-dominated alternatives (IPDA)—a property not shared by any other Split Cycle refinement.
River can be viewed as a simpler variation of Ranked Pairs: while both methods build an acyclic subgraph of the margin graph rooted at the winner, River always constructs a spanning tree, in contrast to the typically denser graphs produced by Ranked Pairs. This structural simplicity improves transparency and interpretability: the tree acts as a ``rebutting diagram,'' with a unique majority path from the winner to every alternative, certifying the winner’s immunity to majority-based challenges.

We give preliminary definitions of social choice functions, as well as the definitions of Split Cycle, Ranked Pairs, Beat Path, and Stable Voting in \Cref{sec:prelims}. Then, in \Cref{sec:river}, we give the formal definition of River and observe some basic properties. In \Cref{sec:Properties}, we show that River satisfies standard axioms, including independence of smith dominated alternatives and, most importantly, independence of pareto dominated alternatives.
Proofs of results marked $(\star)$ can be found in the appendix.

%%%%%%%%%%%%%%%%%%%%%%%%%%%%%%%%%%%%%%%%%%%%%%%%%%%%%%%%%%
% \subsection{Contribution}
%=================== \markus{do we want a contribution subsection? If yes, see commented out text}
% We provide a formal definition and thorough analysis of the proposed River (\RV).
% We compare \RV to the state-of-the-art social choice functions Ranked Pairs, Beat Path, Split Cycle, and Stable Voting.
% To that end, we compare their respective properties and extend the table of \cite{holliday2020split} by \RV. %and add IPDA
% In \Cref{sec:Implementation_and_Experiments} we provide computational evidence showcasing that the winning sets of those five voting rules coincide in almost ``all cases''. This emphasizes the main benefit of River:  While it gives almost every time the same outcome as the other more complex voting rules, it is in comparison easy to compute, explain, and defend a River winner. And it still has the many nice properties we want.

% For a more applied context, we provide an algorithm to compute the River+ diagram. With it one can justify the results of the decision process to the voters in an understandable, yet exhaustive way.
% Since aggregating full pairwise preferences is in most context involving humane voters too pricey, we propose interactive to semi-interactive variants of River to be used in small to medium sized electorates.
%%%%%%%%%%%%%%%%%%%%%%%%%%%%%%%%%%%%%%%%%%%%%%%%%%%%%%%%%%

\section{Preliminaries}
\label{sec:prelims}
Let $[k] \coloneqq \{1,\dots,k\}$.
    We consider elections with a set $N$ of $n\ge 1$ \defstyle{voters} with preferences over a set $A$ of $m\ge 2$ \defstyle{alternatives}.
    The preferences of a voter\footnote{In this paper, we do not consider strategic misrepresentation of preferences, which would require distinguishing between a voter's actual preferences and the ranking they choose on their ballot.} $i\in N$ are given as a strict ranking, e.g., a linear order ${\pref_i}$ over~$A$ and denote by $x\pref_i y$ that voter $i$ ranks alternative $x$ above $y$.
    % A \defstyle{(preference) profile} $\Pref={(\pref_1,\dots,\pref_n)}\in\mathcal{L}(A)^n$ is a list that contains the preferences of $n$ voters.
    A \defstyle{(preference) profile} $\Pref={(\pref_1,\dots,\pref_n)}$ is a list containing the preferences of all voters.
    % The set of all preference profiles over a fixed set $A$ is given by $\PrefProf = \bigcup_{n=1}^\infty {\mathcal L}(A)^{n}$.
    % Given a preference profile $\Pref$, 
    We  may denote its corresponding set of voters and alternatives by $N(\Pref)$ and $A(\Pref)$. %, respectively.
    For an alternative $x \in A$, we let $\Pref_{-x}$ denote the restriction of $\Pref$ to $A \setminus \{x\}$.
    The \defstyle{majority margin} of $x$ over $y$ according to \Pref is
    \[\marginExt{x,y}{\Pref}= \left\lvert\{i\in N \colon x\pref_i y\}\right\rvert - \left\lvert\{i\in N \colon y\pref_i x\}\right\rvert. \]
    When $\marginExt{x,y}{\Pref}>0$, we say $x$ \defstyle{defeats} $y$, denoted \mbox{$x\pref_\Pref y$}, and when $\marginExt{x,y}{\Pref}\ge 0$, we say $x$ \defstyle{weakly defeats} $y$, denoted $x\succeq_\Pref y$.
        % Whenever $\marginExt{x,y}{\Pref}>0$, we say $x$ \defstyle{defeats} $y$, and we say $x$ \defstyle{weakly defeats} $y$ if $\marginExt{x,y}{\Pref}\ge 0$.
    The \defstyle{majority graph} of a profile~$\Pref$ has vertex set $A$ and 
    an edge from $x$ to $y$ whenever $x$ weakly defeats $y$. 
    % edge set $\{(x,y)\in A\times A \colon  x\succeq_\Pref y\}$.
    % $\margingraph_\Pref=(A,\{(x,y)\in A\times A \colon  x\succeq_\Pref y\})$.
    We refer to the majority graph's edges as \textit{majority edges} and to its cycles as \textit{majority cycles}. 
    The \emph{margin graph} $\margingraph_\Pref$ of $\Pref$ is the weighted version of the majority graph where each majority edge $(x,y)$ has weight $\marginExt{x,y}{\Pref}\ge 0$. If it is clear from the context, we drop the index $\Pref$ from $x\pref y$, $\margin{x,y}$ and $\margingraph$.
    A \defstyle{majority path} in \margingraph\ is a sequence $p=(x_1,\dots,x_\ell)$ of distinct alternatives such that for $i\in[\ell-1]$, $\margin{x_i,x_{i+1}}>0$. The \defstyle{strength} of such a path is its lowest margin,
    \[\Strength(p)=\min\{\margin{x_i,x_{i+1}}\colon i\in[\ell-1]\}.\]
    Analogously, \defstyle{majority cycles} in \margingraph are closed paths ($x_\ell=x_1$) such that $\margin{x_i,x_{i+1}}>0$ for all continuous pairs, and their strength is defined as the lowest margin on the cycle.
    
    A \defstyle{Condorcet winner} $x$ is an alternative which defeats all other alternatives, \ie $x\pref y$ for all $y\in A\setminus\{x\}$. Such an alternative does not always exist.
    % Conversely, a \defstyle{Condorcet loser} is defeated by all alternatives.
    A natural extension is the notion of dominant sets. A nonempty set $X\subseteq A$ is \defstyle{dominant}, if 
    $x\pref y$ for all $x\in X$ and $y\in A\setminus X$.
    There always exists at least one dominant set, e.\,g., the whole set of alternatives $A$.

\subsection{Immunity against Majority Complaints}
\label{sec:immu}

    In the absence of a Condorcet winner, every alternative suffers at least one majority defeat, potentially eliciting a 'majority complaint'.
    Certain alternatives can be defended against such complaints by showing the existence of majority paths from the defended alternative to each alternative defeating it.
    In the \textit{unweighted} setting, the \textit{Smith set} consists precisely of the set of alternatives than can be defended in this way. Formally, the Smith set $\GETCHA(\Pref)\subseteq A$ of a preference profile $\Pref$ is defined as the unique inclusion-minimal dominant subset of alternatives.
    % \begin{definition}%[Smith set]
    %     The \defstyle{Smith set} $\GETCHA(\Pref)\subseteq A$ of a preference profile $\Pref$ is the unique inclusion-minimal dominant subset.  
    % \end{definition}
    Such resistance against majority complaints can be generalized to the weighted setting in a straightforward way by taking the strength of defeats into account \cite{heitzig2002social,holliday2020split}.
    % An alternative $x$ is called \defstyle{immune (against majority complaints)} if for every alternative $y$ defeating $x$, there is a  majority path from $x$ to $y$ that is \textit{at least as strong} as the margin of $y$ over $x$. % to defend $x$ over $y$.
    \begin{definition} %[Immunity] 
    \label{def:Immunity}
    Given a preference profile $\Pref$, an alternative $x\in A$ is called \defstyle{immune}, if for every $y\in A\setminus\{x\}$ with $\margin{y,x}>0$, there exists a majority path $p$ in \margingraph from $x$ to $y$ with $\Strength(p)\geq \margin{y,x}$.
    \end{definition}

\subsection{Social Choice Functions}
\label{sec:scf}
 
 A \textit{social choice function (SCF)} $\SCF$ maps a preference profile $\Pref$ to a non-empty set $F(\Pref)\subseteq A$ of \textit{winning} alternatives. Given two social choice functions $F$ and $G$, we call $F$ a \defstyle{refinement} of $G$ if $F(\Pref)\subseteq G(\Pref)$ for all  profiles $\Pref$.
    % We focus on two important classes of SCF's: \defstyle{majoritarian} SCF's and \defstyle{margin-based} SCF's.
    % A majoritarian SCF solely considers the pairwise majority relation, while a margin-based SCF additionally takes the margins of the pairwise majority relations into account. That is, for a margin-based SCF \SCF holds for all preference profiles $\Pref, \Pref'$, if $\margingraph_\Pref=\margingraph_{\Pref'}$ then $\SCF(\Pref)=\SCF(\Pref')$. 

    The social choice function Split Cycle (\SC) \cite{holliday2020split} selects all immune alternatives as winners. 
    Formally, for each majority cycle $C$ in \margingraph, an edge in $C$ with margin $\Strength(C)$ is called a \textit{splitting edge}.
    Splitting edges are, therefore, exactly the edges with the lowest margin within the cycle.
    The \defstyle{Split Cycle diagram}, denoted $\margingraph^{\SC}$, is the subgraph of \margingraph obtained by removing all splitting edges.
    As a result, $\margingraph^{\SC}$ contains no majority cycles.
    The winners under Split Cycle are the alternatives with no incoming edge in $\margingraph^{\SC}$. 
        % If $\margin{x,y}>0$ and $(x,y)$ is not a splitting edge of a majority cycle, then $x$ ``beats'' $y$ according to Split Cycle. We denote by $\margingraph^{\SC}$ the subgraph of $\margingraph$ containing only those edges.
        % The Split Cycle winners are all unbeaten alternatives. 
    % \smallskip
    %
    In this paper, we consider three refinements of Split Cycle: 
    \textit{Ranked Pairs}  \cite{tideman1987independence},
    \textit{Beat Path}  \cite{schulze2011new}, and 
    \textit{Stable Voting} \cite{holliday2023stable}.
    
    \paragraph{Ranked Pairs (\RP)}
        Starting with an empty graph on vertices $A$, add edges from \margingraph one at a time in order of decreasing margin, skipping any edge that would create a cycle.
        % In order of decreasing margin, add the edges of the margin graph \margingraph\ to an initially empty subgraph $\margingraph^{\RP}$ of \margingraph, skipping edges that create cycles.
        In case of ties in the margins, break them using a tiebreaker. % (a linear ordering over the edges).
        The unique alternative without incoming edges in the resulting \defstyle{Ranked Pairs diagram} $\margingraph^{\RP}$ is the Ranked Pairs winner.
            
    \paragraph{Beat Path (\BP)} %A majority path from $x$ to $y$ is a sequence $p_1,\dots,p_n$ of distinct alternatives with $p_1=x$ and $p_n=y$ such that for $1\leq k\leq n-1$, $p_k$ is majority-preferred to $p_{k+1}$, \ie $p_i\pref_\Pref p_{i+1}$. The \defstyle{strength} of a path is its lowest margin.
        Compute the strongest majority path between any pair $(x,y)$.
        If the strength of the strongest path from $x$ to $y$ is at least as 
        great as the strength of the strongest path from $y$ to~$x$, then $x$ is said to ``beat'' $y$ under Beat Path.
        The Beat Path winners are all unbeaten alternatives. 
        
    \paragraph{Stable Voting (\SV)}
        The winners are defined recursively.
        If there is only one alternative in $A$, then that alternative wins.
        % If there are at least two alternatives, order the edges $(x,y)$ of the margin graph by decreasing margin such that $x$ is undefeated according to Split Cycle.
        If there are at least two alternatives, consider all edges $(x,y)$ of \margingraph such that $x$ is immune (Split Cycle winner). 
        Order these edges by decreasing margin. In case of ties in the margins, break them using a tiebreaker.
        In that order, check for each edge $(x,y)$ whether $x$ is a Stable Voting winner in the election without $y$, and if so, declare $x$ as the Stable Voting winner.  

    \smallskip
    Since Ranked Pairs, Beat Path and Stable Voting are refinements of Split Cycle, they always select immune alternatives.
    Additionally, Split Cycle and Ranked Pairs offer a certificate for the immunity of the winner via ``rebutting'' paths in the graphs $\margingraph^{\SC}$ and $\margingraph^{\RP}$.
    \Cref{fig:example_SC_functions} illustrates the social choice functions on an example profile with six alternatives.
    \begin{figure}[t]
        \centering
        \includegraphics[width=\textwidth]{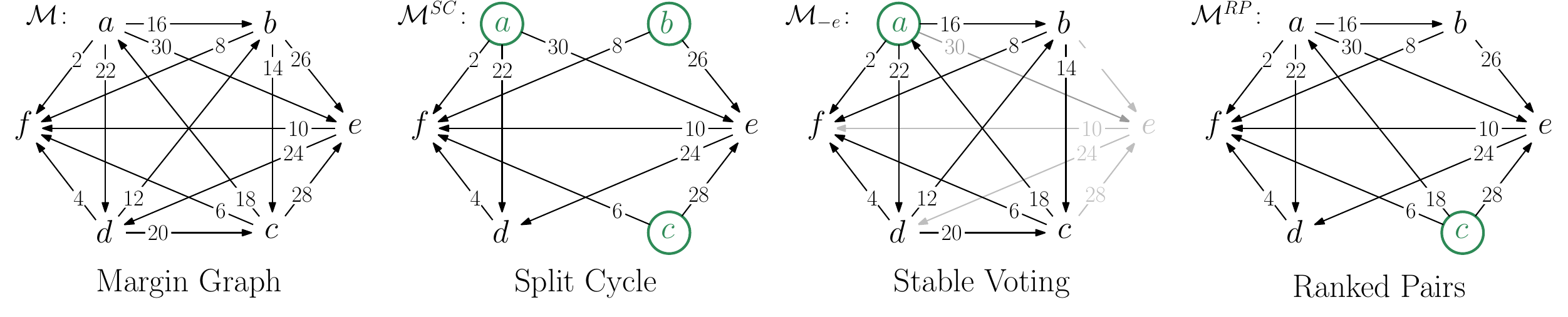}
        \caption{An election with 6 alternatives showing the behavior of the different social choice functions. From left to right are the margin graph $\margingraph$, the Split Cycle graph $\margingraph^{\SC}$ with winning set $\{a,b,c\}$, the Stable Voting winner $a$ with the deciding election without $e$, and the Ranked Pairs graph $\margingraph^{\RP}$ with winner $c$. %ss, and a table of all beatpaths with Beat Path winner $b$.
        }
        \label{fig:example_SC_functions}
    \end{figure}

    \subsection{Tiebreaking and Uniquely Weighted Profiles}
    \label{sec:ties}
    Computing winners for Ranked Pairs and Stable Voting involves ordering majority edges by their margin. With sufficiently many voters, \textit{ties} between margins (\ie two edges with the same margin) are rare, and we often restrict our attention to profiles where ties do not occur.
    \begin{definition}%[Uniquely-weighted profiles]
        A preference profile \Pref is \defstyle{uniquely weighted} if for all alternatives $v,w,x,y\in A_\Pref$ with $v\neq x$ or $w\neq y$, we have $m_{\Pref}(x,y)\neq m_{\Pref}(v,w)$ and $m_{\Pref}(x,y)\neq0$.
    \end{definition}
    An SCF is called \textit{resolute} if it always outputs a single alternative. 
    Split Cycle is not resolute, even for uniquely weighted profiles (see e.\,g., \Cref{fig:example_SC_functions}), while Ranked Pairs, Beat Path, and Stable Voting are resolute on these profiles.
    For preference profiles that are not uniquely weighted, \textit{tiebreakers} are required to compute the winning set.
    \begin{definition}[\cite{zavist1989complete,brill_price_2012}]
        A \emph{tiebreaker} $\tau = (e_1, \dots, e_{\lvert E\rvert})$ is a descending linear ordering of all edges $E$ in the margin graph of a preference profile by decreasing margin, \ie $m_{\Pref}(e_i) \geq m_{\Pref}(e_j)$ for all $1 \leq i < j \leq \lvert E\rvert$.
    \end{definition}    
        % A \textit{tiebreaker} in this sense is defined as a \textit{descending linear ordering} $\lino = (e_1, \dots, e_{\lvert E\rvert})$ of the margin edges $E(\mgg)$ by decreasing margin, \ie $\margin{e_i} \geq \margin{e_j}$ for all $1 \leq i < j \leq \lvert E\rvert$ \cite{zavist1989complete,BrFi12a}.
    This linear order can either be specified directly or derived using a \textit{tiebreaker function}, which takes a preference profile \Pref as input and returns a tiebreaker for \Pref.
    % \begin{definition}
    %     A \emph{tiebreaker function} takes a preference profile \Pref as input and returns a linear order over all edges in the margin graph of \Pref.
    % \end{definition}
    In the literature, the notion for \textit{tiebreaker} and \textit{tiebreaker function} are often used interchangeably.
    
    One common tiebreaker function sorts the edges $(x,y)$ \textit{lexicographically} based on a fixed order of the alternatives:
        first, by the source alternative $x$, and then among edges with the same source, by the target alternative $y$.
    Another example is to sort the edges uniformly at random for each new profile.
    In this paper, we focus on non-random tiebreaker functions.
    
    % One such condition is \textit{consistency}, which requires that the ordering of pairs does not change when alternatives are added or removed.        
    Tideman originally defined Ranked Pairs using a fixed, but arbitrary tiebreaker (function).
    This means that when analyzing an SCF under changes to the election\,--\,such as adding or removing alternatives\,--\, the tiebreaker remains \textit{consistent}. 
    \begin{definition}
        A tiebreaker function is \emph{consistent}, if an edge $(a,b)$ precedes $(c,d)$ in the tiebreaker on $\Pref$ if and only if it $(a,b)$ precedes $(c,d)$ in the tiebreaker for $\Pref_{-x}$, for any $x\notin\{a,b,c,d\}$.
    \end{definition}
    When discussing properties of SCFs, we specify whether the analysis assumes uniquely weighted profiles (where no tiebreaker is needed) or general profiles, in which case we may specify conditions on the tiebreaker (function).
    
    % \subsubsection*{Parallel Universe Tiebreaking}
    % Parallel Universe Tiebreaking (PUT) extends an SCF $F$ by considering all possible tiebreakers instead of a specific one.
    % The resulting SCF, \mbox{$F$-$PUT$}, takes as input a preference profile \Pref and returns all alternatives that could win according to $F$ under any valid tiebreaker. 
    % Formally, if $\mathbb{O}$ is the set of all descending linear orderings, then 
    % $$F\text{-}PUT(\Pref) := \bigcup_{\tau\in\mathbb{O}} F(\Pref,\tau).$$
    % Note that while the winning set of Split Cycle and its refinements with a given tiebreaker are each computable in polynomial time \cite{holliday2020split}, determining the winners of the PUT variant of Ranked Pairs (\mbox{$\RP$-$PUT$}) is known to be \NP-complete \cite{brill2012price}.

\section{The River Method} \label{sec:river}
    Aiming to satisfy as many axiomatic properties as possible can lead to definitions of rather complex social choice functions that are hard to analyse and non-trivial to execute.   
    % In order to design social choice functions fulfilling an increasing set of desirable properties, most recently introduced methods tended to also inhibit increased complexity.
    %While Split Cycle and Ranked Pairs are comparably easy to understand and fairly fast to compute, the computation of Beat Path and particularly Stable Voting is rather involved. 
    % It proves difficult to spot possible winners following solely the definition of the social choice function. 
    For example, \textit{Stable Voting} requires a recursive computation of Stable Voting and Split Cycle winners on smaller instances, reducing the profile one alternative at a time. 
    %A direct answer is guaranteed only for two alternatives.
    %    \textcolor{gray}{\RP - increasing number of edges locked in, aggravates check for cycles}
        
    %     In addition, tackling majority complaints becomes increasingly challenging. 
    % %    Moreover, with increasing complexity of the process it gets difficult to tackle majority complains about the chosen winner.
    %     Whenever the preferences admit no Condorcet winner, no matter which alternative is chosen as the winner, there will be at least one other alternative which is preferred by a majority.
    %     Towards such a complaint has to be given an argument, based on the given preferences, as to why the winner is actually preferable. As all the refinements of Split Cycle choose immune alternative, such a counterargument must exist.
    %     But more complex choice methods lose necessary transparency of the choice process. % and conciseness regarding the reason for choosing the winner over each of the other alternatives.
    Moreover, even if a social choice function is guaranteed to only select immune alternatives, the ``rebutting'' paths that witness the immunity are sometimes nontrivial to find. 
    For example, Ranked Pairs produces a typically very large, acyclic subgraph of the margin graph in which it is nontrivial to find rebutting paths without the help of a computer. In the appendix, \Cref{sec:app_baseballexample}, we present an example with 14 alternatives.
    
    The main motivation behind \textit{River} is simplicity. The method is \textit{(i)} simple to explain (see below for the procedural definition), \textit{(ii)} simple to compute (the winner can easily be calculated ``by hand''), and 
    \textit{(iii)} gives rise to unique rebutting paths that are easy to spot in the resulting diagram. %\michelle{the diagram is a tree, which promises mathematical advantages}

    % \subsection{Definition}
    Operating similarly to Ranked Pairs, River ``splits'' majority cycles by starting with an empty graph and iteratively adding majority edges in order of decreasing margin while maintaining acyclicity. In contrast to Ranked Pairs, for which acyclicity is the only criterion, River also avoids adding majority edges towards an already defeated alternative.
    This results in a tree of majority edges with the winner as the root.
    % The result is a method that effectively resolves majority cycles while maintaining transparency and simplicity in its decision process.
    \begin{definition} \defstyle{River} (\RV) operates as follows, given a preference profile \Pref:
    \begin{enumerate}
        \item Order the majority edges $E(\margingraph)$ by decreasing margin (using a tiebreaker if necessary).
        %\footnote{If there exist pairs of vertices with $m_\Pref(x,y)=m_\Pref(y,x)=0$, use the tiebreaker to add one of these to edges to the order.}
        \item Initialize an empty graph on the set of all alternatives $A$.
        \item Process the edges of \margingraph following the order defined in Step 1 and add an edge if this addition neither creates
        \begin{enumerate}
            \item[(Cy)] a cycle, nor
            \item[(Br)] a branching (two in-edges for an alternative). \label{RMdiffersRP}
        \end{enumerate}
        \item The River winner $\RV(\Pref)$ is the unique source in the resulting \defstyle{River diagram} \margingraphRV.
    \end{enumerate}
    \end{definition}
    \begin{figure}[t]
        \centering
        \includegraphics[width=0.7\columnwidth]{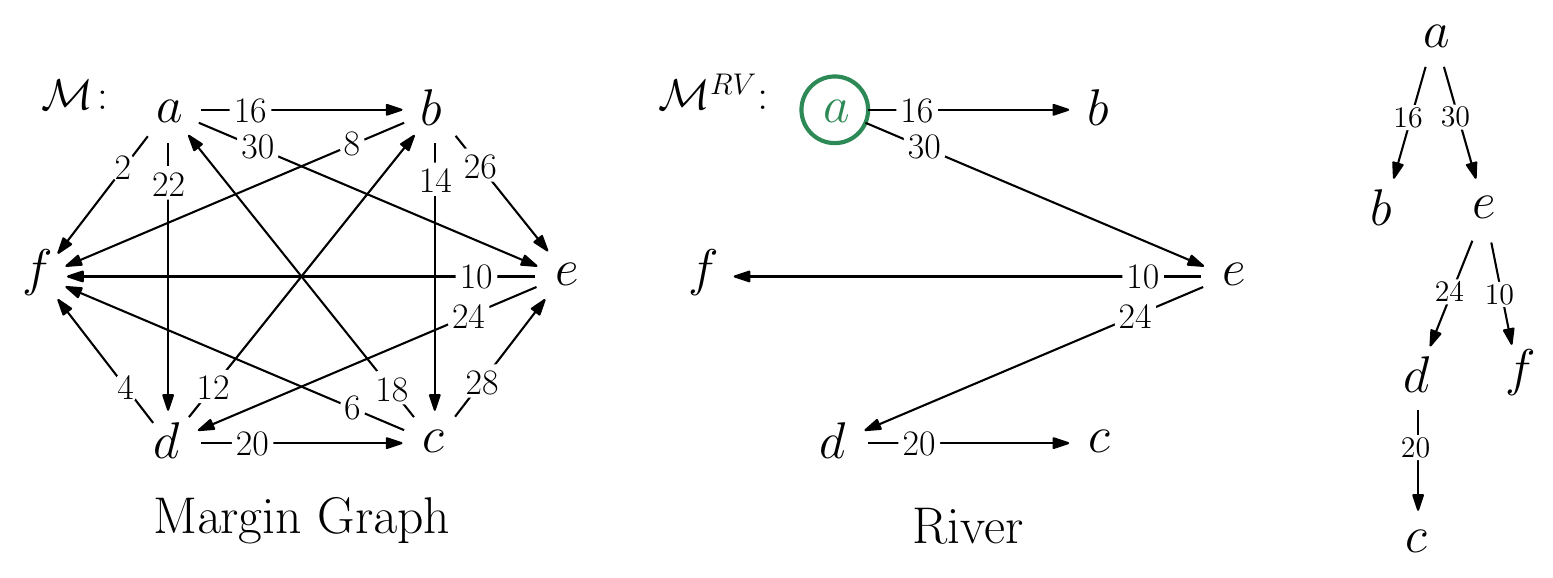}
        \caption{
        The River diagram \margingraphRV\ for the margin graph of \Cref{fig:example_SC_functions} with $\RV(\Pref)=\{a\}$.
        On the right, it is arranged as a tree with the root $a$ at the top.
        % The margin graph \margingraph\ and River diagram \margingraphRV\ for the example of \Cref{fig:example_SC_functions} with $\RV(\Pref)=\{a\}$.
        }
        \label{fig:example_definition}
    \end{figure}
    We may refer to the \defstyle{cycle condition} by $(Cy)$ and the \defstyle{branching condition} by $(Br)$.
    Note that River differs from Ranked Pairs only in $(Br)$, which ensures \margingraphRV to be a tree.
    The consequences of this subtle change will be thoroughly discussed in the rest of the paper. 
    It is easy to check that River winners are immune:
    \begin{proposition}
        River is a refinement of Split Cycle, \ie $x\in \RV(\Pref)$ implies $x\in \SC(\Pref)$.
    \end{proposition}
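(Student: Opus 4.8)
The plan is to invoke the fact, recalled in \Cref{sec:scf}, that the Split Cycle winners are exactly the \emph{immune} alternatives in the sense of \Cref{def:Immunity}. It therefore suffices to show that the River winner $x = \RV(\Pref)$ is immune: for every $y \in A \setminus \{x\}$ with $\margin{y,x} > 0$ we must exhibit a majority path in $\margingraph$ from $x$ to $y$ of strength at least $\margin{y,x}$. The natural candidate for such a path is one that already lives inside the River diagram $\margingraphRV$, so the whole argument reduces to inspecting why the edge $(y,x)$ was rejected during the construction.

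First I would fix $y$ with $\margin{y,x} > 0$, so that the margin edge $(y,x)$ exists, and argue that this edge was skipped because of the cycle condition $(Cy)$ rather than the branching condition $(Br)$. The key observation is that $x$ is the unique source of $\margingraphRV$, so $x$ has no incoming edge in the final diagram; since River only ever adds edges and never deletes them, $x$ has no incoming edge at any stage of the construction. Adding $(y,x)$ can only ever give a second in-edge to $x$ itself, so $(Br)$ is never the reason $(y,x)$ is rejected — hence it must be $(Cy)$.

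The second step extracts the path. When $(y,x)$ was processed, adding it would have created a cycle, which means the partial diagram at that moment already contained a directed path $p = (x = z_0, z_1, \dots, z_k = y)$. Every edge of $p$ was added strictly before $(y,x)$ was processed, hence precedes $(y,x)$ in the decreasing-margin ordering of Step~1, so $\margin{z_i, z_{i+1}} \ge \margin{y,x} > 0$ for all $i$. Consequently $p$ is a genuine majority path in $\margingraph$ (distinct vertices, all margins positive) with $\Strength(p) = \min_i \margin{z_i,z_{i+1}} \ge \margin{y,x}$, which is exactly what immunity of $x$ against the defeat $y \to x$ demands. Since $y$ was arbitrary, $x$ is immune, and thus $x \in \SC(\Pref)$.

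I do not expect a serious obstacle here; the only points needing care are the clean separation of the two rejection conditions (so that ``$(y,x)$ was skipped'' genuinely forces $(Cy)$, using that $x$ stays a source and that edges are never removed, so every partial diagram is a subgraph of $\margingraphRV$) and the remark that the argument goes through verbatim both for uniquely weighted profiles and for general profiles under any (consistent or not) tiebreaker, since all it uses is that edges are processed in non-increasing order of margin.
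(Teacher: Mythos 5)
Your proof is correct and rests on the same key observation as the paper's: if the edge $(y,x)$ is rejected by the cycle condition, the edges already added (all processed earlier, hence with margin at least $\margin{y,x}$) form a path from $x$ to $y$ that rebuts the defeat. The only difference is presentational — you argue directly that the River winner is immune (using that $x$ stays a source throughout, so $(Br)$ never applies to $(y,x)$), whereas the paper proves the contrapositive via the splitting-edge definition of the Split Cycle diagram — and this relies only on the easy direction of the stated equivalence between immunity and Split Cycle winners, so there is no gap.
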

    \begin{proof}
        Let \Pref be a preference profile. We show the contraposition of the claim, \ie $x\notin\SC(\Pref)$ implies $x\notin\RV(\Pref)$. If $x\notin\SC(\Pref)$, there is an alternative $y\in A\setminus\{x\}$ with $(y,x)\in E(\margingraph^{\SC})$. This means the edge $(y,x)$ is not a splitting edge of any cycle in \margingraph.
        If $(y,x)\in E(\margingraphRV)$, $x\notin\RV(\Pref)$ follows. So, assume $(y,x)\notin E(\margingraphRV)$.
        If the edge was not added because of $(Br)$, there is another edge $(z,x)\in E(\margingraphRV)$ and $x\notin\RV(\Pref)$.
        If the edge was not added because of $(Cy)$, it would have closed a cycle with edges of margin at least $\margin{y,x}$, which is a contradiction to $(y,x)$ not being a splitting edge. 
        % Therefore, $x\notin\RV(\Pref)$.    
    \end{proof}
    \Cref{fig:example_definition} shows the River diagram for the margin graph from \Cref{fig:example_SC_functions}, with $\RV(\Pref)\neq\RP(\Pref)$, containing notably fewer edges than the Ranked Pairs diagram (this is even more pronounced in the larger example in appendix, \Cref{sec:app_baseballexample}).
    In fact, the \margingraphRV always contains exactly $m-1$ edges, whereas the $\margingraph^{\RP}$ may contain all $\genfrac(){0pt}{2}{m}{2}$ edges. 
    Note also that River, Ranked Pairs and Split Cycle clearly justify their choice through the respective diagrams, unlike Stable Voting and Beat Path.
    % Unlike Stable Voting and Beat Path, the three other social choice functions---River, Ranked Pairs and Split Cycle---offer a clear rationale for their choice through the corresponding diagrams.
    
    % By its definition,
    River, like Ranked Pairs and Stable Voting, requires a tiebreaker to be resolute for non-uniquely weighted preference profiles, and is resolute for uniquely weighted profiles.   
    % River shares the need for a tiebreaker in non-uniquely weighted preference profiles with Ranked Pairs and Stable Voting. Equipped with a tiebreaker % that yields a total order over the edges, 
    % the method is resolute. Obviously, River is also resolute for uniquely weighted profiles.
    \begin{proposition} \label{prop:Resolute}
        River is resolute.
    \end{proposition}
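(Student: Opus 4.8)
The plan is to show that the River diagram $\margingraphRV$ is always a directed spanning tree of the complete graph on $A$ in which every vertex has at most one incoming edge, and then to observe that such a graph has exactly one source, which is by definition $\RV(\Pref)$; this gives resoluteness immediately (and incidentally the $m-1$ edge count mentioned after the definition).

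I would start from two structural facts that hold by construction and do not depend on the tiebreaker: by condition $(Cy)$, $\margingraphRV$ has no directed cycle, and by condition $(Br)$, every alternative has at most one incoming edge in $\margingraphRV$. I would also invoke the obvious monotonicity of the procedure: when an edge is processed, the current partial diagram is a subgraph of the final $\margingraphRV$, since edges are only ever added. In particular, any directed path present at processing time is still present in $\margingraphRV$, and any vertex that has an incoming edge at processing time has one in $\margingraphRV$.

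The crux is connectivity of $\margingraphRV$ as an undirected graph. Suppose not; then $\margingraphRV$ has at least two weakly connected components, and we pick two distinct ones, $C_1$ and $C_2$. Since $\margingraphRV$ has no directed cycle and is finite, following incoming edges upward from any vertex of $C_i$ (using $(Br)$, this walk is uniquely determined and cannot repeat a vertex) terminates at a vertex $r_i \in C_i$ with no incoming edge, and $r_1 \neq r_2$. Because $\margin{r_1,r_2} = -\margin{r_2,r_1}$, at least one of the two directed edges between $r_1$ and $r_2$ is a majority edge; say $(r_1,r_2) \in E(\margingraph)$. This edge is processed at some step but not added, so it is blocked by $(Cy)$ or $(Br)$. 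It cannot be $(Cy)$: that would require a directed $r_2$-to-$r_1$ path in the current partial diagram, hence in $\margingraphRV$, whose last edge enters $r_1$, contradicting that $r_1$ has no incoming edge. It cannot be $(Br)$: that would require $r_2$ to already have an incoming edge in the current partial diagram, hence in $\margingraphRV$, again a contradiction. So $(r_1,r_2)$ would have been added, contradicting that $C_1$ and $C_2$ are distinct components. Hence $\margingraphRV$ is connected.

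To finish, I would count edges: a connected graph on the $m$ alternatives has at least $m-1$ edges, while at most one incoming edge per vertex forces at most $m$ edges in total; and exactly $m$ edges would mean every vertex has exactly one incoming edge, so the upward walk above would produce a directed cycle, contradicting $(Cy)$. Thus $\margingraphRV$ has exactly $m-1$ edges, so exactly one alternative has in-degree zero; it is the unique source, and it equals $\RV(\Pref)$, so River always returns a single alternative. I expect the only delicate point to be the case analysis showing $(r_1,r_2)$ is blocked by neither $(Cy)$ nor $(Br)$; the key is to phrase everything in terms of the final diagram via monotonicity, which also makes the argument uniform over uniquely weighted profiles and over arbitrary fixed tiebreakers, since the processing order is never actually used.
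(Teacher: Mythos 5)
Your proof is correct and takes essentially the same route as the paper, whose proof simply states that the River diagram is an acyclic, connected graph with at most one incoming edge per vertex, hence a tree with a distinct root, so exactly one alternative wins. The only difference is that you explicitly establish the connectivity (via the two-component/two-roots argument) and the $m-1$ edge count that the paper asserts without detail; this fills in a routine gap rather than changing the approach.
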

    \begin{proof}
        Using the tiebreaker, the ordering of the edges in Step~1 is strict and Step~3 is well-defined. The River diagram is then an acyclic, connected graph in which each vertex has at most one incoming edge, \ie a tree with a distinct root, and $\lvert\RV(\Pref)\rvert=1$.
    \end{proof}

    % Recently, it was shown in \cite{riverPUT2024melanowski} that the parallel-universe tiebreaking variant of River (\mbox{\RV-$PUT$}) can be computed in polynomial time\,--\,despite the equivalent problem being \NP-complete for Ranked Pairs.
    % The algorithm and runtime proof heavily depend on the tree structure of the River diagram.
    % \begin{theorem}[\cite{riverPUT2024melanowski}]
    %     The winning set of \mbox{\RV-$PUT$} can be computed in polynomial time.
    % \end{theorem}

\section{Axiomatic Properties} \label{sec:Properties}
\label{subsec:BasicAxioms}

\definecolor{DarkLine}{RGB}{156,156,156}    %dark line  
\definecolor{LightLine}{rgb}{1,1,1}         %light line 
\definecolor{DarkRow}{RGB}{222, 222, 222}   %dark row 
\definecolor{LightRow}{RGB}{242, 242, 242}  %light row  
\definecolor{White}{rgb}{1,1,1}
%----------------------------------
\definecolor{DarkRowRiver}{RGB}{212,212,212}        % DARKER dark row   rgb()
\definecolor{LightRowRiver}{RGB}{232,232,232}          % DARKER light row
\definecolor{Grayy}{RGB}{242, 242, 242}
\newcommand{\svipda}{(\y*)}
\newcommand{\RM}{RV}
    \begin{table}[t]
    \centering
    \resizebox{0.7\textwidth}{!}{%
    \footnotesize
    \begin{tabular}{lccccc}
    \rowcolor{White}
    & \SC & \SV & \RP & \BP & \cellcolor{Grayy}\RM \\
    
    \rowcolor{DarkRow}
    Anonymity & \y & \hphantom{*}\y* & \hphantom{*}\y* & \y & \cellcolor{DarkRowRiver}\hphantom{*}\y* \\
    
    \rowcolor{LightRow}
    Neutrality & \y & \hphantom{*}\y* & \hphantom{*}\y* & \y & \cellcolor{LightRowRiver}\hphantom{*}\y* \\
    
    \rowcolor{DarkRow}
    Monotonicity & \y & \n & \y & \y & \cellcolor{LightRowRiver}\y \\
    
    \rowcolor{LightRow}
    Condorcet Winner & \y & \y & \y & \y & \cellcolor{DarkRowRiver}\y \\
    
    \rowcolor{DarkRow}
    Condorcet Loser & \y & \y & \y & \y & \cellcolor{LightRowRiver}\y \\
    
    \rowcolor{LightRow}
    Smith criterion & \y & \y & \y & \y & \cellcolor{DarkRowRiver}\y \\
    
    \rowcolor{DarkRow}
    Pareto efficiency & \y & \y & \y & \y & \cellcolor{LightRowRiver}\y \\
    
    \rowcolor{LightRow}
    ISDA & \y & $\hphantom{\circ}\y^\circ$ & \y & \y & \cellcolor{DarkRowRiver}\y \\
    
    \rowcolor{DarkRow}
    IPDA & \n & \n & \n & \n & \cellcolor{LightRowRiver}\hphantom{*}\y* \\
    \end{tabular}%
    }
    \caption{Overview of the properties for Split Cycle and refinements.
    ``\y'' indicates the property is satisfied for general profiles.
    ``\y*'' indicates the property is satisfied for general profiles with a suitable tiebreaker.
    ``$\y^\circ$'' indicates the property is satisfied for uniquely weighted profiles.
    ``\n'' indicates the property is violated even for uniquely weighted profiles.
    For all results not concerning River or IPDA, we refer to the paper by \citet{holliday2020split}.}
    \label{tab:BasicAxioms}
    \end{table}

In this section, we analyze the axiomatic properties of River and compare them to the other introduced social choice functions. An overview of our results is provided in \Cref{tab:BasicAxioms}.

% \subsection{Basic Axioms} 
    We start by observing that River satisfies all basic axioms required of a reliable social choice function. 
    For formal definitions and proofs, we refer to the \Cref{subsec:app_axioms} in the appendix.

    %%%%%%%%%%%%%%%%%%%%%%% Anonymity and neutrality %%%%%%%%%%%%%%%%%%%%%%%
    \defstyle{Anonymity} and \defstyle{neutrality} require that all  voters, respectively alternatives, are treated equally.
    Both properties are naturally satisfied by River for uniquely weighted preference profiles.
    In profiles that are not uniquely weighted, however, the tiebreaker might invalidate anonymity or neutrality.\footnote{Ranked Pairs and Stable Voting encounter the same issue. Generally, anonymity and neutrality conflict with resoluteness; e.\,g., consider a profile with $A_\Pref=\{x,y\}$ and $m_\Pref(x,y)=0$.}
    
    % %%%%%%%%%%%%%%%%%%%%%%% Reversal Symmetry %%%%%%%%%%%%%%%%%%%%%%%
    % A natural extension of neutrality is \defstyle{reversal symmetry}.
    % %While neutrality is concerned with pairwise switching alternatives' places in every ballot, reversal symmetry extends this to switching the whole ballot of each voter.
    % It states that if there is exactly one winner for a profile, this alternative should not be a winner for fully reversed preferences.
    % This is satisfied by Split Cycle and all its refinements.
    % % Note that this is the case for any kind of preferences. The assumption of reversal symmetry, implying a unique winning alternative, ensures that no tiebreaker for non-uniquely weighted preferences has negative effects.
    
    %%%%%%%%%%%%%%%%%%%%%%% Monotonicity %%%%%%%%%%%%%%%%%%%%%%%
    \defstyle{Monotonicity} demands that if support for a winning alternative increases (\ie some voters rank it higher without changing the relative order of the other alternatives), this alternative must remain a winner.  
    River satisfies monotonicity, as do Split Cycle, Ranked Pairs, and Beat Path. Maybe surprisingly, Stable Voting violates monotonicity \cite{holliday2023stable}.

    %%%%%%%%%%%%%%%%%%%%%%%% condorcet %%%%%%%%%%%%%%%%%%%%%%%
    River, like the other social choice functions, always selects the Condorcet winner if one exists, and never selects a Condorcet loser.  
    %%%%%%%%%%%%%%%%%%%%%%% Smith %%%%%%%%%%%%%%%%%%%%%%%
    Recall the \defstyle{Smith set} (\Cref{sec:immu}) as a generalization of a Condorcet winner. 
    The \defstyle{Smith criterion} states that the set of winners has to be chosen from the Smith set. 
    This property is implied by selecting only immune alternatives and thus satisfied by all Split Cycle refinements.
    
    %%%%%%%%%%%%%%%%%%%%%%% Pareto %%%%%%%%%%%%%%%%%%%%%%%
    Given a preference profile $\Pref$ and two alternatives $x,y \in A$, $y$ \textit{Pareto-dominates} $x$ if every voter ranks $y$ over $x$, i.e., $y \succ_i x$ for all $i \in N$.
    \textit{Pareto efficiency} requires that Pareto-dominated alternatives are never chosen. Since Split Cycle satisfies Pareto efficiency, so do all its refinements. 
    
\subsection{Independence of Smith-Dominated Alternatives} \label{subsec:ISDA}

Next, we consider independence criteria. These properties prescribe that the winning set of a social choice function should not be affected by the presence (or absense) of alternatives that are in some sense ``inferior''.
Independence properties can also be interpreted as  safeguards against agenda manipulation: introducing an inferior alternative into an election should not disturb the set of chosen candidates. 

    % An independence property consists of two parts:  
    % First, it defines a classification for a type of alternative. Second, it is satisfied only if the winning set remains unchanged when an alternative of that type is removed from the election.

    %%%%%%%%%%%%%%%%%%%%%%% ISDA %%%%%%%%%%%%%%%%%%%%%%%
    We begin with independence of Smith-dominated alternatives.
    Recall that the Smith set is the smallest dominating subset of alternatives.
    Every alternative \textit{not} in the Smith set is called \defstyle{Smith-dominated}.
   {Independence of Smith-dominated alternatives} requires that the winners do not change when a Smith-dominated alternative is removed.
    \begin{definition}%[ISDA]
    \label{def:ISDA}
        A social choice function \SCF is \defstyle{independent of Smith-dominated alternatives} (ISDA) if for any preference profile \Pref and $x\in A\setminus \SM(\Pref)$, we have $\SCF(\Pref)=\SCF(\Pref_{-x})$.
    \end{definition}\noindent
     Split Cycle, Ranked Pairs, and Beat Path satisfy ISDA \cite{holliday2020split}. 
     Stable Voting satisfied ISDA for uniquely weighted profiles, but violates it in general \cite{holliday2023stable}.
    We show, River satisfies ISDA. % as long as the employed tiebreaker is consistent.

    \begin{theorem}
         River satisfies ISDA for general preference profiles. % when equipped with a consistent tiebreaker.
    \end{theorem}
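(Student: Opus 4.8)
The plan is to show that River ``commutes with restriction to the Smith set.'' Fix a profile $\Pref$, an alternative $x \in A \setminus \SM(\Pref)$, and write $S := \SM(\Pref)$. Two preliminary facts will do the bookkeeping. First, deleting $x$ from every ballot changes the relative order of no other pair, so $\marginExt{a,b}{\Pref_{-x}} = \marginExt{a,b}{\Pref}$ for all $a,b \neq x$; hence $\margingraph_{\Pref_{-x}}$ is obtained from $\margingraph_\Pref$ by deleting $x$ together with its incident edges, and---since the tiebreaker is consistent---the processing order induced on the surviving edges is the same in $\Pref$ and $\Pref_{-x}$. Second, $\SM(\Pref_{-x}) = S$: the set $S$ is still dominant after deleting $x$; and if $S'$ is the minimal dominant set of $\Pref_{-x}$, then $S' \subseteq S$, while $S'$ also dominates $x$ in $\Pref$ (every element of $S \supseteq S'$ strictly defeats the Smith-dominated $x$), so $S'$ is dominant in $\Pref$ and therefore $S' = S$.

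The crucial structural observation is that $\margingraph_\Pref$ contains \emph{no} majority edge entering $S$ from outside: if $s \in S$ and $b \notin S$, then $s$ strictly defeats $b$, so $\marginExt{b,s}{\Pref} < 0$ and $(b,s)$ is not even a weak-majority edge. I would use this twice for the River run on $\Pref$. (i) In the resulting tree $\margingraphRVExt{\Pref}$, the parent of every vertex of $S$ other than the root is again in $S$ (a parent edge points into $S$, so its tail lies in $S$), and the root itself lies in $S$ because River refines Split Cycle and hence satisfies the Smith criterion; consequently $\margingraphRVExt{\Pref}$ restricted to $S$ is a tree whose unique source is $\RV(\Pref)$. (ii) When River processes an edge with both endpoints in $S$, its cycle test and its branching test can only fail because of previously accepted edges that also have both endpoints in $S$: a second incoming edge at a vertex of $S$ must be an $S$--$S$ edge, and a directed path closing a cycle at a vertex of $S$ consists---by a backward walk that can never leave $S$---entirely of $S$--$S$ edges. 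Hence the sub-run of River on the $S$--$S$ edges is exactly the River procedure executed on the weighted subgraph $\margingraph_\Pref[S]$ with the induced edge order, and its output is precisely $\margingraphRVExt{\Pref}$ restricted to $S$, whose source is $\RV(\Pref)$.

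Running the same argument for $\Pref_{-x}$ shows that $\RV(\Pref_{-x})$ is the source of the River procedure executed on $\margingraph_{\Pref_{-x}}[\SM(\Pref_{-x})] = \margingraph_{\Pref_{-x}}[S]$ with its induced edge order. By the two preliminary facts this weighted digraph together with its induced edge order coincides with $\margingraph_\Pref[S]$ and its induced order, so both runs return the same tree, hence the same source, \ie $\RV(\Pref) = \RV(\Pref_{-x})$. I expect the one genuine obstacle to be making step (ii) airtight---showing that neither the acyclicity check nor the branching check for an $S$--$S$ edge can ever be triggered by an accepted edge incident to a non-Smith alternative---which rests entirely on the ``no edge enters $S$'' property and the short backward induction along any hypothetical cycle. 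The remaining ingredients (margin preservation under deletion, invariance of the Smith set, and the tiebreaker bookkeeping) are routine and would simply be stated explicitly.
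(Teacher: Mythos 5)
Your argument is correct, and it takes a genuinely different route from the paper's. The paper proves ISDA by a pointwise contradiction: assuming the winner $y\in\RV(\Pref)$ acquires an in-edge $(z,y)$ in $\margingraphRVExt{\Pref_{-x}}$, it argues that the only possible reason $(z,y)$ was rejected in the run on $\Pref$ is a majority cycle through $x$ containing $y$, which cannot exist since no majority cycle involves both $x$ and a Smith-set member; tiebreaker consistency is invoked to align the two processing orders. You instead prove a localization lemma: since no majority edge enters $S=\SM(\Pref)$ from outside, the cycle and branching tests for an $S$--$S$ edge can only be triggered by previously accepted $S$--$S$ edges, so the River run ``commutes with restriction to $S$,'' and the $S$-restricted diagram is exactly the River output on $\margingraph_\Pref[S]$ with the induced order; combined with invariance of the margins, of the Smith set, and (by consistency) of the induced edge order under deleting $x\notin S$, both elections produce the same restricted tree and hence the same source. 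Your route buys a strictly stronger conclusion (the entire Smith-restricted River tree, not just the winner, is invariant) and makes explicit the cascading issues that the paper's terse proof glosses over---in particular, that a rejection of $(z,y)$ could in principle be due to the branching condition or to earlier decisions that themselves differ between the two runs, possibilities your induction over the common processing order of $S$--$S$ edges rules out cleanly. The paper's proof is shorter and works directly at the winner. One small remark: your appeal to the Smith criterion to place the root in $S$ is unnecessary, since following parent edges from any vertex of $S$ can never leave $S$ and must terminate at the root; and note that, like the paper, your argument relies on the tiebreaker being consistent, which is the intended reading of the theorem for non-uniquely-weighted profiles.
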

    \begin{proof}
    Let $x\in A\setminus \SM(\Pref)$.
    By definition of the Smith set, there can be no majority cycle in \margingraph\ involving an element of the Smith set and $x$ at the same time.
    Towards $\RV(\Pref)\subseteq\RV(\Pref_{-x})$, let $y\in\RV(\Pref)$ and assume towards contradiction $y\not\in\RV(\Pref_{-x})$. Then $y\in \SM(\Pref)$ and there is an edge $(z,y)\in E(\margingraphRVExt{\Pref_{-x}})$ for some $z\in A\setminus\{x,y\}$, but $(z,y)\notin E(\margingraphRVExt{\Pref})$.
    Since the tiebreaking order for $\Pref_{-x}$ is the order of \Pref without all edges containing $x$, $(z,y)\notin E(\margingraphRVExt{\Pref})$ can only be because there is a majority cycle in $\margingraph(\Pref)$ with $(z,y)$ as its splitting edge that is not in $\margingraph(\Pref_{-x})$. But $x$ cannot be part of a majority cycle containing also $y$.
    The other direction is analogous.
    %\markus{Der konsitente tiebreaker sollte im Beweis vorkommen.}
    \end{proof}

    %%%%%%%%%%%%%%%%%%%%%%% IPDA %%%%%%%%%%%%%%%%%%%%%%%

    \subsection{Independence of Pareto-Dominated Alternatives} \label{subsec:IPDA}

    Next, we consider alternatives that are ``inferior'' because they are \textit{Pareto}-dominated.
    The resulting property can be defined analogously to ISDA.\footnote{IPDA is logically independent from ISDA, because neither does Pareto-dominance imply Smith-dominance,  nor vice versa \citep{fishburn_theory_1973}.}
    Independence of Pareto-dominated alternatives was already studied by Fishburn \cite{fishburn_theory_1973} under the name ``reduction condition.''
    \begin{definition}%[IPDA]
    \label{def:IPDA}
        A social choice function \SCF is \defstyle{independent of Pareto-dominated alternatives} (IPDA) if for any preference profile \Pref and $x,y\in A_\Pref$ with $\marginExt{y,x}{\Pref} = |N_\Pref|$, we have $\SCF(\Pref)=\SCF(\Pref_{-x})$.
    \end{definition}\noindent

    Observe that constructing a Pareto-dominated alternative is not a complex endeavour. One can choose any of the current alternatives as a blueprint and construct a copy alternative which is worse in every aspect.
    
    Surprisingly, Split Cycle, Ranked Pairs, Beat Path and Stable Voting all violate IPDA, even for uniquely weighted preferences with at most $5$ alternatives.\footnote{The example for Stable Voting is due to Wesley Holliday (personal communication, 2024).}

    \begin{theorem} \label{thm:IPDA_otherSC}
        Split Cycle, Ranked Pairs, Beat Path and Stable Voting do not satisfy IPDA, not even for uniquely weighted profiles.
    \end{theorem}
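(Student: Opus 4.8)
The statement is a non-existence-style impossibility claim — we must exhibit, for each of the four methods, a uniquely weighted profile together with a Pareto-dominated alternative whose removal changes the winner. The natural approach is constructive: build a small margin graph (at most five vertices) on alternatives, say $\{a, b, c, d, e\}$, that realizes a carefully chosen cyclic structure, then add the Pareto-dominated alternative as a ``copy'' of some existing alternative — pick a blueprint alternative $v$ and introduce $x$ with $x$ ranked directly below $v$ by every voter, so $m(v,x) = n$ and $x$'s margins against every other alternative $u$ equal $m(v,u)$ up to a small correction. Since the paper only needs \emph{existence} of such profiles, I would first design the desired margin graph and verify it is realizable by an actual preference profile (this is routine: any weighted tournament with all margins of the same parity as $n$ and $|m(x,y)| \le n$ is realizable, e.g.\ via McGarvey-type constructions, and uniquely-weighted-ness is just a genericity condition one arranges by perturbing margins).

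The core of the argument is choosing the cyclic structure so that the four methods all misbehave. I would aim for a single profile (or two very similar profiles) that works for all four at once, mirroring the figure-driven style of the paper. The key mechanism: in the reduced profile $\Pref_{-x}$ there is a Condorcet winner or a clean Split Cycle outcome, but adding $x$ creates a new cycle through $x$ whose \emph{weakest edge is not incident to $x$} — so removing that edge (as a splitting edge) now eliminates a different edge than before, flipping which alternative survives. Concretely I would look for a configuration where $x$ sits on a cycle $x \to u_1 \to u_2 \to \dots \to x$ such that the splitting edge of this new cycle lies among the $u_i$'s and coincides with an edge that was previously ``protecting'' the old winner. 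For Ranked Pairs and River, the analogous trick is that the newly inserted edges incident to $x$ get processed (because of their margins) before a critical edge, blocking that critical edge via acyclicity; for Beat Path, $x$ introduces a new strong path that changes the beat relation between two old alternatives; for Stable Voting, the recursive elimination order is perturbed.

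The concrete steps would be: (1) fix $n$ and a vertex set; (2) write down the target margin graph on $A \setminus \{x\}$ (I expect something like a $3$-cycle or $5$-cycle with one dominating or dominated vertex, engineered so the method in question has an unambiguous winner $w$); (3) define $x$ as a Pareto-dominated copy of a well-chosen blueprint and compute the full margin graph on $A$; (4) verify unique-weightedness; (5) compute, by hand, the winner of each of \SC, \RP, \BP, \SV on $\Pref$ and confirm it differs from $w = \SCF(\Pref_{-x})$; (6) present this as one or two annotated figures plus a short verification table, exactly as \Cref{fig:example_SC_functions} is used. Because the verifications in step (5) are finite and small, they are tedious but not deep; I would relegate the full case checks to the appendix (the statement is presumably marked $(\star)$ or its proof deferred) and keep only the figure and the headline computation in the main text.

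\textbf{Main obstacle.} The real difficulty is \emph{design}, not verification: finding a single small configuration on which \emph{all four} methods simultaneously fail IPDA, while also keeping the profile uniquely weighted and the blueprint construction clean. Split Cycle and Ranked Pairs tend to need slightly different cyclic obstructions than Beat Path (which is path-strength-based) and Stable Voting (which is recursion-based), so I anticipate I may not get a single universal example and will instead present a small family — perhaps one profile handling \SC, \RP, \RV-style failures and a second handling \BP and \SV — and I would need to double-check each by hand since these methods are notoriously error-prone to evaluate. The footnote already signals that the Stable Voting example came from Holliday, suggesting the authors indeed assembled the examples piecemeal rather than from one clean construction.
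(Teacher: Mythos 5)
Your plan follows the same route as the paper: exhibit explicit counterexample profiles in which one alternative Pareto-dominates another, and verify by direct computation that each method's winner changes when the dominated alternative is deleted (the paper does exactly this, giving four margin graphs in \Cref{fig:IPDA_otherSC}, one per method, with the underlying profiles $\Pref_1,\dots,\Pref_4$ written out in \Cref{subsec:app_profiles}). The problem is that for a purely existential statement like this one, the examples \emph{are} the proof, and your proposal stops exactly where the mathematical content begins. You correctly describe the intended mechanisms (a new cycle through the dominated alternative whose splitting edge is not incident to it, for \SC; an early-processed edge blocking a critical edge, for \RP; a new strong path altering the beat relation, for \BP; a perturbed recursion, for \SV), and these are indeed how the paper's examples work, but you never produce a single concrete margin graph or profile, and you yourself identify the ``design'' step as the main obstacle and leave it unresolved. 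Note also that your hope for one universal example did not pan out for the authors: the paper needs four distinct profiles, one per method. Without at least one fully specified and verified example per method, the theorem is not established.

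A second, smaller gap concerns realizability. You treat it as ``routine via McGarvey-type constructions,'' but Pareto dominance forces $\margin{b,a}$ to equal the number of voters exactly, and Debord/McGarvey-style constructions change the number of voters along the way, so they cannot be applied off the shelf; the paper points this out explicitly and therefore writes down concrete profiles instead. Your copy trick (insert $x$ immediately below a blueprint $v$ in every ballot) does guarantee $\margin{v,x}=n$, but it also forces $\margin{x,u}=\margin{v,u}$ for every other alternative $u$, which destroys unique weightedness. Repairing this requires letting the depth of $x$ below $v$ vary across ballots and then re-verifying both unique weightedness and the intended cyclic structure---which again amounts to constructing and hand-checking explicit profiles, i.e.\ the very work the proposal defers.
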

    \begin{proof} 
    We present the counterexamples as margin graphs in  \Cref{fig:IPDA_otherSC}. The corresponding preference profiles $\Pref_1$, $\Pref_2$, $\Pref_3$, and $\Pref_4$ can be found in \Cref{subsec:app_profiles}. Note that in all profiles, $b$ Pareto-dominates $a$.
%Note that all counterexamples are uniquely weighted.
    
        \textit{Split Cycle:} Observe that $c\in\SC(\Pref_1)$, because its only incoming edge $(d,c)$ is the smallest-margin edge of cycle $(c, a, d, c)$. However, $c\notin\SC((\Pref_1)_{-a})$, because $(c,a,d,c)$ was the only cycle that $(d,c)$ was contained in.

        \textit{Ranked Pairs:} We observe, $\RP(\Pref_2)=\{d\}$. Ranked Pairs adds $(b,a),(b,c),(a,c),(d,a)$, skips $(c,d)$ because of the cycle $(c,d,a,c)$, adds $(d,a)$, and skips $(d,b)$. However, $\RP((\Pref_2)_{-a})=\{b\}$, because without $a$, Ranked Pairs adds $(b,c)$ and $(c,d)$, skipping $(d,b)$.
    
        \textit{Beat Path:} It can be verified by computing the beatpath strengths that $\BP(\Pref_3)=\{d\}$, but $\BP((\Pref_3)_{-a})=\{c\}$.
    
        \textit{Stable Voting:} Again, it can be verified that $\SV(\Pref_4)=\{d\}$, but $\SV((\Pref_4)_{-a})=\{c\}$. 
    \end{proof}   
    
    \begin{figure}[t]
        \centering
        \includegraphics[width=0.9\textwidth]{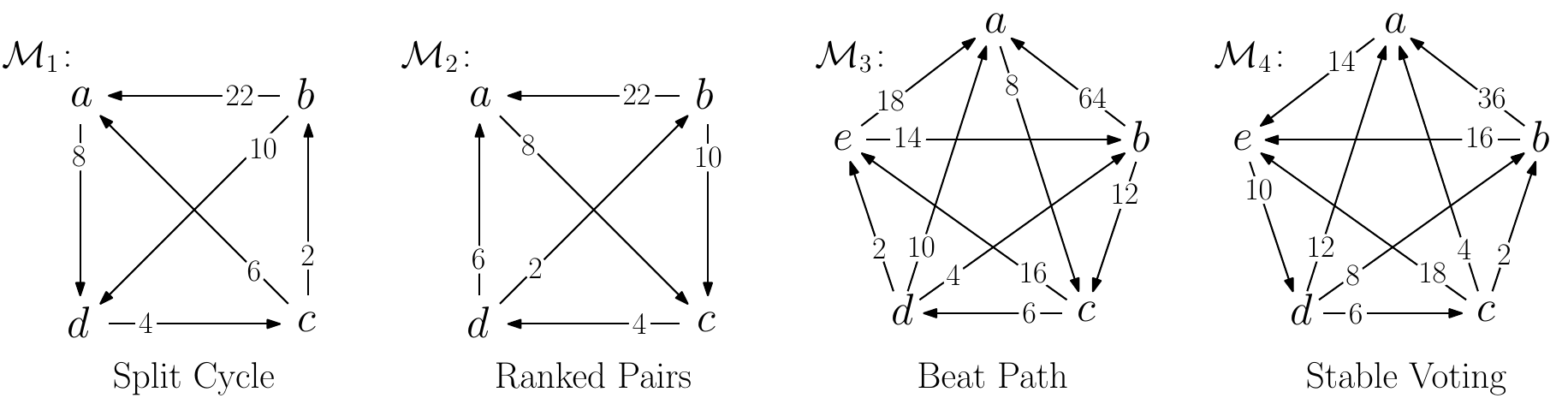}
        \caption{Margin graphs for the preference profiles used in the proof of \Cref{thm:IPDA_otherSC}. 
        % (first three graphs from the left) and \Cref{thm:IPDA_SV} (rightmost graph). 
        % Corresponding preference profiles can be found in the appendix. 
        }
        \label{fig:IPDA_otherSC}
    \end{figure}

    River satisfies IPDA due to its unique way of resolving cycles in the margin graph. We prove this claim for uniquely weighted preference profiles in \Cref{thm:IPDA_River_uw}, and extend it in \Cref{thm:IPDA_River_TB} to general profiles. % as long as River is equipped with a tiebreaker satisfying a simple consistency property. 
    
    To build towards the first result, we start with the following observation related to the concept of covering:
    % First, we make the following observation which will be used in \Cref{thm:IPDA_River_uw}.
    % Given a preference profile \Pref and two alternatives $x,y \in A_\Pref$,  $y$ is said to \defstyle{cover}  $x$ if $\margin{y,x}>0$ and $\margin{y,z}\geq\margin{x,z}$ for all $z\in A\setminus\{x,y\}$. 
    An alternative $y$ is said to \defstyle{cover} another alternative $x$ if $\margin{y,x}>0$ and $\margin{y,z}\geq\margin{x,z}$ for all $z\in A\setminus\{x,y\}$. We observe that Pareto domination implies covering.
    \begin{observation}%[\cite{Bord83a}]
        Given a preference profile \Pref, if $y$ Pareto-dominates $x$, then $y$ covers $x$.
    \end{observation}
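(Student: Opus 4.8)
The plan is to unfold both the hypothesis and the conclusion at the level of individual ballots and then apply transitivity of each voter's strict ranking. Write $n = |N|$, and recall that ``$y$ Pareto-dominates $x$'' means $y \succ_i x$ for every voter $i \in N$. For the first half of the covering condition I would observe that then $\{i \in N : y \succ_i x\} = N$ and $\{i \in N : x \succ_i y\} = \emptyset$, so $\marginExt{y,x}{\Pref} = n - 0 = n \ge 1 > 0$, which is exactly $\margin{y,x} > 0$.

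The substantive part is the inequality $\margin{y,z} \ge \margin{x,z}$ for each $z \in A \setminus \{x,y\}$. Fixing such a $z$, the key step is the pair of per-voter implications obtained from $y \succ_i x$ together with transitivity of $\succ_i$: if $x \succ_i z$ then $y \succ_i z$, and if $z \succ_i y$ then $z \succ_i x$. These yield the set inclusions $\{i : x \succ_i z\} \subseteq \{i : y \succ_i z\}$ and $\{i : z \succ_i y\} \subseteq \{i : z \succ_i x\}$. Passing to cardinalities gives $|\{i : y \succ_i z\}| \ge |\{i : x \succ_i z\}|$ and $|\{i : z \succ_i x\}| \ge |\{i : z \succ_i y\}|$; negating the latter and adding it to the former, then recognizing the two sides as majority margins, yields $\marginExt{y,z}{\Pref} \ge \marginExt{x,z}{\Pref}$. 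Combined with the first half, this is precisely the statement that $y$ covers $x$.

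I do not expect a genuine obstacle: the whole argument is an immediate consequence of transitivity, with no case analysis beyond the two implications above. The only point worth stating carefully is that each voter's ballot is a strict linear order (as fixed in \Cref{sec:prelims}), so $z$ is comparable to both $x$ and $y$ on every ballot; hence the two implications together exhaust the relevant cases and there are no individual-level ties to account for.
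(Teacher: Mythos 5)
Your proposal is correct and follows essentially the same route as the paper, which justifies the observation by noting that $\margin{y,x}=n$ and that $y\pref_i x\pref_i z$ whenever $x\pref_i z$; you merely spell out the margin bookkeeping (including the complementary inclusion $\{i : z \pref_i y\}\subseteq\{i : z \pref_i x\}$) that the paper leaves implicit. No gap.
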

    Intuitively, if $y$ Pareto-dominates $x$, then $\margin{y,x}=n$ and $y$ is ranked above $x$ by every voter. Thus, whenever $x\pref_i z$, we also have $y\pref_i x\pref_i z$. This observation will be instrumental in the proof of the following claim.
    \begin{theorem} \label{thm:IPDA_River_uw}
        River satisfies IPDA for uniquely weighted preference profiles.
    \end{theorem}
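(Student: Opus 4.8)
The plan is to follow the execution of River on $\Pref$ step by step and argue that $x$ always ends up as a leaf of $\margingraphRV$, so that deleting $x$ turns $\margingraphRVExt{\Pref}$ into $\margingraphRVExt{\Pref_{-x}}$ without moving the root; since the root is the River winner, this gives $\RV(\Pref) = \RV(\Pref_{-x})$.

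First I would note that, as $y$ Pareto-dominates $x$, the margin $\margin{y,x}$ equals $|N_\Pref|$, the largest value any margin can take; by unique weighting every other edge has strictly smaller margin, so $(y,x)$ is processed first and added to the (empty) graph. Hence $x$ receives $(y,x)$ as its unique in-edge, and every later edge $(z,x)$ with $z \neq y$ is rejected by $(Br)$.

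The heart of the argument is to show that River never adds an out-edge of $x$. Suppose it did, and among all out-edges of $x$ that get added let $(x,z)$ be the one of largest margin. By the preceding observation $y$ covers $x$, so $\margin{y,z} \geq \margin{x,z}$, and unique weighting makes the inequality strict; thus $(y,z)$ is processed before $(x,z)$. If $(y,z)$ had been added, $z$ would already have an in-edge when $(x,z)$ is processed, so $(x,z)$ would be blocked by $(Br)$, a contradiction; so $(y,z)$ was rejected, and not by $(Br)$ (that in-edge of $z$ would still block $(x,z)$ later), hence by $(Cy)$. So at the moment $(y,z)$ is processed there is a directed path $P$ from $z$ to $y$ whose edges all have margin exceeding $\margin{y,z} > \margin{x,z}$; these edges, and the edge $(y,x)$, are therefore all present when $(x,z)$ is processed. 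I would then rule out that $P$ passes through $x$: since $x \neq z$ and $x \neq y$, such a $P$ would contain an out-edge of $x$ that has been added and has margin $> \margin{y,z} > \margin{x,z}$, contradicting the maximal choice of $(x,z)$. Consequently $x \to z \xrightarrow{P} y \to x$ is a genuine cycle present at the time $(x,z)$ is processed, so $(x,z)$ is rejected by $(Cy)$ — contradiction. Hence $x$ is a sink-leaf of every intermediate (and the final) River diagram on $\Pref$.

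Finally I would run River on $\Pref$ and on $\Pref_{-x}$ in lockstep. The edge order for $\Pref_{-x}$ is the edge order for $\Pref$ with all $x$-incident edges removed, because the pairwise margins among $A \setminus \{x\}$ are unchanged by deleting $x$ (and $\Pref_{-x}$ is again uniquely weighted). By induction on the processed edges, the partial diagram on $\Pref$ restricted to $A \setminus \{x\}$ coincides with the partial diagram on $\Pref_{-x}$: the $x$-incident edges of $\Pref$ are either $(y,x)$, or some $(z,x)$, or out-edges of $x$, and none of these changes the restriction; and for an edge $(u,v)$ with $u,v \neq x$, the tests $(Cy)$ and $(Br)$ give the same verdict on both sides, since $x$ being a sink-leaf means no $u \to v$ path ever runs through $x$, and since no out-edge of $x$ is ever added the in-edge of any $v \neq x$ is never $x$-incident. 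Thus $\margingraphRVExt{\Pref}$ minus the leaf $x$ equals $\margingraphRVExt{\Pref_{-x}}$, and as $x$ is a leaf distinct from the root the unique source is the same in both, so $\RV(\Pref) = \RV(\Pref_{-x})$. The main obstacle is the leaf argument of the third paragraph: closing a cycle through $(y,x)$ and, in particular, excluding the possibility that the blocking path $P$ visits $x$; the rest is bookkeeping that relies on unique weighting only to secure resoluteness and the strictness of the margin comparisons.
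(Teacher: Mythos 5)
Your proof is correct and follows essentially the same route as the paper's: show that $(y,x)$ is processed first and added, that no out-edge of $x$ can ever be added (using covering and the cycle closed through $(y,x)$), and hence that removing the leaf $x$ leaves the rest of the River diagram and its root unchanged. Your extra step excluding that the blocking path passes through $x$ (via the maximal-margin choice of $(x,z)$) tightens a detail the paper's proof leaves implicit, but it does not constitute a different approach.
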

    \newcommand{\EnoX}{E_{-x}}
    \newcommand{\uw}{uniquely weighted\xspace}
    \begin{proof}
        Let \Pref be a uniquely weighted preference profile and let $x,y \in A_\Pref$ such that $y$ Pareto-dominates $x$. Let $E=E(\margingraphRVExt{\Pref})$ and $\EnoX=E(\margingraphRVExt{\Pref_{-x}})$ denote the edge sets of the River diagrams with and without $x$, respectively. %Then $y$ covers $x$. %By \Cref{lem:Pareto_cover}, $y$ covers $x$. 
        
        First, we show that $(y,x)\in E$. Since $\margin{y,x}=n$ and the preference profile is \uw, the edge $(y,x)$ is processed as the first edge. 
        Hence, adding $(y,x)$ can neither form a cycle nor branching, and $(y,x)$ is added to $E$.
        
        Next, we show that $(x,z)\notin E$ for all $z\in A\setminus\{x\}$.
        Assume towards contradiction that $(x,z)$ is added to $E$. Then $(Br)$ was not fulfilled, and thus $(y,z)\notin E$.
        Since $y$ covers $x$ and \Pref is \uw, ${\margin{y,z}>\margin{x,z}}$ and $(y,z)$ is processed before $(x,z)$. 
        If $(y,z)$ was not added because of $(Br)$, \ie there is some $(z',z)\in E$ with higher margin, then $(Br)$ would also reject $(x,z)$. This would be a contradiction.
        Therefore, $(y,z)$ must be rejected by $(Cy)$. This means, there must be a path $(z,p_1,\dots,p_k,y)\in E$ with strength larger than $\margin{y,z}$.
        But this path would form a cycle with $(y,x)$ and $(x,z)$: $(z,p_1,\dots,p_k,y,x,z)$. Thus, $(c,y)$ would reject $(x,z)$, a contradiction. We conclude that $(x,z)\notin E$.

        So in $\Pref$, no edge adjacent to $x$ is considered in $E$ apart from $(y,x)$, which cannot be part of any cycle, since $x$ has no outgoing edges in $E$. Removing $x$ from the election does not influence the margin of any edge not containing $x$, and all edges incident to $x$ are simply removed. Therefore, in $\EnoX$, we have an edge $(z,z')\in\EnoX$ if and only if $(z,z')\in E$.
        % Finally, we show that all edges not containing $x$ are in $E$ if and only if they are in $\EnoX$. That is, for all $x \notin \{z,z'\}$, we have
        % (*) $(z,z')\in E$ if and only if $(z,z')\in\EnoX$.
        % % \begin{align} \tag{$*$}
        % %     % (*) \quad \text{for all }z\neq x\neq z'\text{, we have }
        % %     (z,z')\in E \quad \text{if and only if} \quad (z,z')\in\EnoX. \label{star}
        % % \end{align}
        % We prove this by induction over $(z,z')$ from largest to smallest margin. Assume we are about to process $(z,z')$, and (*) is true for all edges processed so far. 
        % First, assume $(z,z')\in E\setminus \EnoX$. Then we must have added some $(z'',z')$ or some path $z'\to w_1\cdots\to w_k\to z$ with higher strength to $\EnoX$. Since then $z''\neq x$ and all $w_i\neq x$, it follows from (\ref{star}) that the same edges $(z'',z')$, respectively $z'\to w_1\cdots\to w_k\to z$, have also been added to $E$, hence $(z,z')\notin E$ after all. 
        % Assume, on the other hand, $(z,z')\in \EnoX\setminus E$. Then the same argument holds with $E$ and $\EnoX$ exchanged, so $(z,z')\notin \EnoX$ after all. 
        % Hence $(z,z')$ is in either both $E$ and $\EnoX$ or neither, proving the induction step.
        
        Therefore, all edges in $E$ apart from $(y,x)$ are in $\EnoX$, and thus $\RV(\Pref)=\RV(\Pref_{-x})$.
    \end{proof}

    \newcommand{\parecon}{Pareto-consistent\xspace}
    
    For non-uniquely weighted preference profiles, River satisfies IPDA when equipped with any tiebreaker that ``respects'' Pareto dominance, which we call \textit{\parecon}. 
    % considers Pareto-dominating alternatives' defeats first. 
    % We call such tiebreakers \defstyle{\parecon}.
    \begin{definition}
        A consistent tiebreaker is called \defstyle{\parecon} if, whenever $y$ Pareto-dominates $x$, for all $z\notin\{x,y\}$, the edge $(y,z)$ is ranked higher than the edge $(x,z)$. 
    \end{definition}
    Observe that such a \parecon tiebreaker always exists: lexicographical tiebreakers are \parecon since they are consistent and the tiebreaking voter ranks $y$ above $x$.
    %, even when there a multiple Pareto dominations: If an alternative $y'$ Pareto-dominates alternative $y$, which itself Pareto-dominates alternative $x$, then $y'$ also Pareto-dominates $x$. Therefore, for any alternative $z\in A_\Pref\setminus\{y',y,x\}$, a Pareto-consistent tiebreaker has to rank $(y',z)$ higher than $(y,z)$, which has to be ranked higher than $(x,z)$.
    % \setcounter{theorem}{6}
    % \setcounter{theorem}{7}
    \begin{restatable}{theorem}{thmIPDARiverTB}
    {\restatemarker}
    \label{thm:IPDA_River_TB}
        River satisfies IPDA when equipped with a \parecon tiebreaker.
    % \end{theorem}
    \end{restatable}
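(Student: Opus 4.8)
The plan is to upgrade the proof of \Cref{thm:IPDA_River_uw}: its only use of unique weights was to guarantee that $(y,x)$ is the very first processed edge, and that is precisely the point a ties analysis based on the \parecon property must replace. Fix a profile \Pref, alternatives $x,y\in A_\Pref$ with $\margin{y,x}=n$ (writing $n=|N_\Pref|$), and a \parecon tiebreaker, and let $E$ and $E_{-x}$ denote the edge sets of the River diagrams $\margingraphRVExt{\Pref}$ and $\margingraphRVExt{\Pref_{-x}}$. I would first record two facts. Deleting $x$ changes no remaining margin, so by consistency of the tiebreaker the processing order for $\Pref_{-x}$ is exactly the order for \Pref with every $x$-incident edge removed. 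And since $\margin{y,x}=n$, alternative $y$ covers $x$, so $\margin{y,z}\ge\margin{x,z}$ for all $z$ and, because the tiebreaker is \parecon, $(y,z)$ is processed before $(x,z)$ for every $z\notin\{x,y\}$.

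Everything then reduces to the structural claim that \emph{$x$ has no outgoing edge in $\margingraphRVExt{\Pref}$}. Granting it, I would conclude as in the uniquely weighted case: process the edges for \Pref and for $\Pref_{-x}$ in lock-step along the common order (skipping $x$-incident edges in the \Pref-run). For any edge $(v,w)$ with $v,w\neq x$, the branching condition behaves identically in both runs, since the only in-edge of $w$ that could be present in one but not the other would be $(x,w)$, which the claim excludes; and the cycle condition behaves identically, since no path can use $x$ as an interior vertex (that would need an outgoing edge of $x$). Hence $E_{-x}=\{e\in E : e\text{ not incident to }x\}$. By the branching condition $x$ has at most one in-edge in $E$, and $x\notin\RV(\Pref)$ by Pareto efficiency of River (a refinement of Split Cycle), so $x$ has exactly one in-edge $(u,x)$ and $E=E_{-x}\cup\{(u,x)\}$; thus $E_{-x}$ is $E$ with the leaf vertex $x$ and its incident edge $(u,x)$ deleted, which leaves the unique source unchanged, so $\RV(\Pref)=\RV(\Pref_{-x})$.

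I expect the claim to be the main obstacle, because $(y,x)$ need not be processed first and $x$ may lie in a ``Pareto chain'' $y\succ x\succ\cdots$. I would argue by contradiction: suppose $(x,z)\in E$; then $z\neq y$ since $\margin{x,y}<0$, and by the two facts above $(y,z)$ is an edge processed before $(x,z)$. If $(y,z)\in E$, or if $(y,z)$ was rejected by the branching condition, then $z$ already has an in-edge when $(x,z)$ is processed, so $(x,z)$ is branching-rejected --- a contradiction. So $(y,z)$ was cycle-rejected: when it is processed there is a path $P$ from $z$ to $y$ in $E$, all of whose edges have margin $\ge\margin{y,z}\ge\margin{x,z}$, and $P$ persists until $(x,z)$ is processed. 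If $\margin{x,z}=n$, then $y$ Pareto-dominates $z$ as well, so $\margin{y,z}=n$, every edge of $P$ carries the maximal margin $n$, and transitivity of Pareto-domination forces $z$ to Pareto-dominate $y$ --- impossible. Hence $\margin{x,z}<n$, so the maximal-weight edge $(y,x)$ is processed before $(x,z)$; and when $(y,x)$ is processed the current diagram contains only maximal-weight (Pareto) edges, so $(y,x)$ cannot be cycle-rejected (that would be a Pareto cycle through $x$ and $y$).

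It remains to handle the two ways $(y,x)$ can behave. If $(y,x)\in E$, then $P$ followed by $(y,x)$ is a walk from $z$ to $x$ present in $E$ when $(x,z)$ is processed, so $(x,z)$ is cycle-rejected --- contradiction. If $(y,x)$ is branching-rejected, then $x$ already has an in-edge $(u,x)\in E$ when $(y,x)$ is processed, necessarily of maximal margin $n$, so $u$ Pareto-dominates, and hence covers, $x$; thus $(u,z)$ is an edge (note $z\neq u$ since $\margin{x,z}\ge0$) processed before $(x,z)$, and $(u,x)\in E$ already when $(x,z)$ is processed. Running the same branching/cycle analysis on $(u,z)$: if $(u,z)\in E$ or it is branching-rejected, then $z$ has an in-edge when $(x,z)$ is processed and $(x,z)$ is branching-rejected; if $(u,z)$ is cycle-rejected, the witnessing path $Q$ from $z$ to $u$ followed by $(u,x)$ is a walk from $z$ to $x$ present when $(x,z)$ is processed, so $(x,z)$ is cycle-rejected. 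Every case contradicts $(x,z)\in E$, establishing the claim and hence the theorem.
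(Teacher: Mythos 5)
Your proof is correct, and its skeleton matches the paper's: establish the structural claim that $x$ has no outgoing edge in $\margingraphRVExt{\Pref}$, note that consistency of the tiebreaker makes the processing order on edges not incident to $x$ identical with and without $x$, and conclude by a lock-step comparison that the two diagrams coincide off $x$, so the root is unchanged. The difference is in how the structural claim is proved. The paper routes \Cref{thm:IPDA_River_TB} through the proof of \Cref{thm:IQDA_River}: it picks, among the dominators of $x$, the one $y$ whose edge $(y,x)$ the tiebreaker ranks first, argues that $(y,x)$ is added to the diagram (relying, in the quasi version, on the extra tiebreaker clause that every $(x,z)$ is ranked after both $(y,z)$ and $(y,x)$), and then rejects any $(x,z)$ via the cycle through $(y,x)$ exactly as in \Cref{thm:IPDA_River_uw}. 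You instead keep the given dominator $y$, show $(y,x)$ cannot be cycle-rejected because at that stage only margin-$n$ (Pareto) edges are present and Pareto domination is acyclic, and, when $(y,x)$ is branching-rejected, switch to the actual in-edge $(u,x)$ of $x$ (necessarily itself a Pareto-domination edge) and rerun the covering/tiebreaking argument with $u$ in place of $y$; you also treat the corner case $\margin{x,z}=n$ explicitly via transitivity of Pareto domination. This buys something concrete: a merely Pareto-consistent tiebreaker, as defined in the paper, only orders $(y,z)$ before $(x,z)$ and does not promise that $(x,z)$ comes after $(y,x)$, so the literal ``drop the prefix quasi-'' reduction leaves precisely the situations covered by your sub-case with $(u,x)$ and by the $\margin{x,z}=n$ case to the reader; your argument fills these in and is self-contained, at the cost of a longer case analysis.
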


    The proof can be found in \Cref{subsec:app_IPDA}, where we also show that we cannot drop the Pareto-consistency requirement by providing an example where River fails to satisfy IPDA for some tiebreaker that is not \parecon. 
    
   We can strengthen \Cref{thm:IPDA_River_TB} by considering a more permissive notion of dominance that we call \textit{quasi-Pareto dominance}. Instead of requiring that $y$ is preferred over $x$ by every voter (i.e., $\margin{y,x}=n$), we merely require that $y$ covers $x$ and that the margin of $y$ over $x$ is stronger than any other margin involving $x$. %in some sense extension of cover consistency
    %    
    % Arguably, in large elections Pareto-dominated alternatives are very rare and hard to manufacture. We therefore suggest the following weaker notion of quasi-Pareto domination, and the corresponding axiom of independence of quasi-Pareto-dominated alternatives:
        \begin{definition}%[Quasi-Pareto domination]
            Let $\Pref$ be a  profile and $x,y\in A_\Pref$.
            We say \defstyle{$y$ quasi-Pareto-dominates $x$} if (1) $y$ covers $x$ and (2)
            for all $z\in A_\Pref\setminus\{x,y\}$, %we have 
            % covering:
            % $\margin{y,x}>0$, 
            % $\margin{y,z}\ge\margin{x,z}$, 
            $\margin{y,x}\ge\margin{z,x}$ and $\margin{y,x}\ge\margin{x,z}$.
            % $\marginExt{y,x}{\Pref}\ge\marginExt{z,x}{\Pref}$ and $\marginExt{y,x}{\Pref}\ge\marginExt{x,z}{\Pref}$.
            % and for all $y'\in A_\Pref$ with $\margin{y',x}>\margin{y,x}$, we also have $\margin{y',z}\ge\margin{x,z}$.
        \end{definition} 
    It is easy to check that Pareto domination implies quasi-Pareto domination and that the latter is an acyclic relation. 
    This leads to a strengthened version of IPDA, namely \textit{inpedendence of quasi-Pareto-dominated alternatives} (IQDA).
%Moreover, if $y$ quasi-Pareto-dominates $x$ and $\margin{y',x}>\margin{y,x}$, then also $y'$ quasi-Pareto-dominates $x$.
        % \begin{definition}%[IQDA] 
        % \label{def:IQDA}
        %     A social choice function \SCF is \defstyle{independent of quasi-Pareto-dominated alternatives} (IQDA) if for any preference profile \Pref and $x,y\in A_\Pref$ where $y$ quasi-Pareto-dominates $x$, we have $\SCF(\Pref)=\SCF(\Pref_{-x})$.
        % \end{definition}
    We show that River satisfies IQDA when it is equipped with quasi-Pareto-consistent tiebreakers, which are defined analogously to Pareto-consistent tiebreakers. 
    %The proof follows the same technique as the proof for IPDA. %by considering quasi-Pareto-consistent tiebreakers that .... 
    The formal definition of such tiebreakers and the proof of the following theorem can be found in \Cref{subsec:app_IPDA}.
    % \textcolor{blue}{``We can show that River statisfies \textit{inpedendence of quasi-Pareto-dominated alternatives} (IQDA) in a similar fashion as for IPDA by considering quasi-Pareto-consistent tiebreakers that .... The definition of such tiebreakers and the proof of the following theorem can be found in the appendix.''}
    %     \begin{definition}
    %         A \parecon tiebreaker is called \defstyle{quasi-Pareto-consistent} if for two edges $(y,x)$, $(y',x)$ with equal margins, of which $y$ but not $y'$ quasi-Pareto-dominate $x$, the tiebreaker ranks $(y,x)$ before $(y',x)$, and whenever $y$ quasi-Pareto-dominates $x$, any edge $(x,z)$ is always ranked after both the edges $(y,z)$ and $(y,x)$.
    %     \end{definition}
    % Note that quasi-Pareto-consistent tiebreakers exist since quasi-Pareto domination is acyclic.

    \begin{restatable}{theorem}{thmIQDARiver}
    {\restatemarker}
    % \begin{theorem}[$\star$]
    \label{thm:IQDA_River}
        River satisfies IQDA for \uw preference profiles, and for general profiles when equipped with a quasi-Pareto-consistent tiebreaker. %, also for non-\uw preference profiles.
   \end{restatable}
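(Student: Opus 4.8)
The plan is to adapt, almost line for line, the arguments behind \Cref{thm:IPDA_River_uw} and \Cref{thm:IPDA_River_TB}, exploiting two features of quasi-Pareto domination: condition~(1) \emph{is} precisely the statement that $y$ covers $x$, so the covering observation is available for free; and condition~(2) says that $\margin{y,x}$ is at least as large as the margin of every other edge incident to $x$. Fix a profile \Pref --- either \uw, or general together with a quasi-Pareto-consistent tiebreaker --- and alternatives $x,y\in A_\Pref$ with $y$ quasi-Pareto-dominating $x$; write $E=E(\margingraphRVExt{\Pref})$ and $E_{-x}=E(\margingraphRVExt{\Pref_{-x}})$. In the \uw case every margin comparison below is strict and the tiebreaker plays no role; in the general case the quasi-Pareto-consistent tiebreaker is exactly what breaks the relevant ties in the ``right'' direction. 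Since Pareto domination implies quasi-Pareto domination, this is a strengthening of the IPDA results above.

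First I would show $(y,x)\in E$. By condition~(2), $(y,x)$ weakly precedes every other edge incident to $x$ in the processing order, and strictness of the margin comparisons (in the \uw case) or quasi-Pareto-consistency of the tiebreaker (in general) makes it strictly first among them. When $(y,x)$ is processed, $x$ has neither an incoming nor an outgoing edge in the partial diagram, so adding $(y,x)$ triggers neither $(Br)$ --- there is no second in-edge at $x$ --- nor $(Cy)$ --- there is no path out of $x$ --- and hence $(y,x)$ is added. Next I would show that $(x,z)\notin E$ for every $z$. Suppose some $(x,z)$ is added; since this did not trigger $(Br)$, vertex $z$ has no incoming edge at that moment, so in particular $(y,z)\notin E$ then. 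As $y$ covers $x$, we have $\margin{y,z}\ge\margin{x,z}$, so $(y,z)$ is processed strictly before $(x,z)$ (using, where needed, quasi-Pareto-consistency to break the tie); having been processed earlier but not added, $(y,z)$ was rejected either by $(Br)$ or by $(Cy)$. A $(Br)$-rejection means $z$ already had an incoming edge, which persists and would reject $(x,z)$ as well --- a contradiction. A $(Cy)$-rejection means some majority path $(z,p_1,\dots,p_k,y)$ was present in the partial diagram and hence persists; together with the edge $(y,x)$ from the first step and the edge $(x,z)$ under consideration it closes the cycle $(z,p_1,\dots,p_k,y,x,z)$, so $(x,z)$ would be rejected by $(Cy)$ --- again a contradiction.

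It then remains to reduce to the election without $x$. By the two steps above, $(y,x)$ is the unique edge of $E$ incident to $x$, and it lies on no cycle of $E$ since $x$ has out-degree $0$. Deleting $x$ leaves the margin of every edge among $A_\Pref\setminus\{x\}$ unchanged and, by consistency of the tiebreaker, leaves their relative processing order unchanged; moreover, throughout the run on \Pref the extra edge $(y,x)$ never affects the $(Cy)$- or $(Br)$-test of any edge avoiding $x$, since it lies on no cycle and only raises the in-degree of $x$. Hence the run on $\Pref_{-x}$ adds exactly the edges of $E$ not incident to $x$, \ie $E_{-x}=E\setminus\{(y,x)\}$. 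Since $x$ has an incoming edge in $E$ it is not the source of the River diagram for \Pref, while every other vertex has the same in-degree in $E$ as in $E_{-x}$; hence the unique source coincides, giving $\RV(\Pref)=\RV(\Pref_{-x})$.

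The argument is essentially a routine lift of \Cref{thm:IPDA_River_uw}, so I expect the only genuine obstacle to be the tiebreaking bookkeeping in the general case. The quasi-Pareto-consistent tiebreaker must be set up so that, whenever $y$ quasi-Pareto-dominates $x$, the edge $(y,x)$ outranks \emph{every} edge incident to $x$ --- which is precisely what the first step needs, and which in the Pareto setting held automatically because $\margin{y,x}=|N_\Pref|$ was the strict maximum --- in addition to ranking each $(y,z)$ above $(x,z)$ as in the definition of Pareto-consistency. Checking that such tiebreakers exist (lexicographic ones still work) and that consistency survives the removal of $x$ is where care is required; every other part of the proof transfers directly from the uniquely weighted case.
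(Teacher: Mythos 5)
Your proposal is correct and follows essentially the same route as the paper's proof: show $(y,x)$ enters the diagram, show no edge $(x,z)$ ever does, and then argue that all edges among $A\setminus\{x\}$ are processed identically with and without $x$, exactly as in \Cref{thm:IPDA_River_uw}. The one wrinkle is your choice of $y$: the paper's quasi-Pareto-consistent tiebreaker is weaker than the condition you posit (it only promotes $(y,x)$ over tied edges $(y',x)$ when $y'$ does \emph{not} quasi-Pareto-dominate $x$), so for an arbitrary dominator $y$ the edge $(y,x)$ need not come first; the paper fixes this by selecting, among all quasi-Pareto-dominators of $x$, the one whose edges the tiebreaker ranks first---a one-line adjustment to your argument.
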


\section{Discussion}
We introduced River, a novel single-winner voting method based on majority margins, and compared it to established methods including Ranked Pairs, Beat Path, Stable Voting, and Split Cycle. Like these methods, River is a refinement of Split Cycle—but with a greatly simplified decision process that resembles Ranked Pairs while avoiding many of its complications.

Our axiomatic analysis showed that River satisfies a wide range of desirable properties, including key independence criteria that guard against agenda manipulation. In particular, River satisfies independence of Smith-dominated alternatives and, crucially, independence of Pareto-dominated alternatives (IPDA)\,--\,a property not satisfied by any of the other methods we considered.

% Beyond the simplicity of Rivers procedure, the tree structure of its rebutting diagram enables powerful computational advantages. Most notably, it was recently shown in \cite{riverPUT2024melanowski} that the parallel-universe variant of River (\mbox{\RV-$PUT$})-which returns all winner that could arise under any valid tiebreaking order—can be computed in polynomial time. This is in stark contrast to the equivalent problem for Ranked Pairs, which is known to be \NP-complete. Crucially, the tractability of \RV-$PUT$ highly depends on the tree structure of River.

Overall, River appears to be a robust and transparent voting method that offers some axiomatic advantages over established voting methods, is straightforward to compute (in particular ``by hand''), and generates simple, easy-to-interpret diagrams that justify the winner's selection.

% \medskip\noindent \textbf{Open Questions.}\quad 
% Some of River’s properties—such as its resistance to strategic voting—depend on the choice of the tiebreaker. While our results show that River satisfies different desirable axioms under different tiebreaking assumptions, it remains open whether a single tiebreaker exists that simultaneously satisfies all of them. Future work should investigate this question, along with a broader axiomatic and empirical comparison to other methods.

% Additionally, the practical computational efficiency of River in comparison to the other voting methods is still open. Not only in regards to computational running time, but also efficiency of manipulation.

\backmatter
% \bmhead{Acknowledgements}

% Acknowledgements are not compulsory. Where included they should be brief. Grant or contribution numbers may be acknowledged.

% Please refer to Journal-level guidance for any specific requirements.
%%===========================================================================================%%
%% If you are submitting to one of the Nature Portfolio journals, using the eJP submission   %%
%% system, please include the references within the manuscript file itself. You may do this  %%
%% by copying the reference list from your .bbl file, paste it into the main manuscript .tex %%
%% file, and delete the associated \verb+\bibliography+ commands.                            %%
%%===========================================================================================%%
\bibliography{algo,aaai25, abb} %, zotero}% common bib file

%% BioMed_Central_Bib_Style_v1.01

\begin{thebibliography}{32}
% BibTex style file: bmc-mathphys.bst (version 2.1), 2014-07-24
\ifx \bisbn   \undefined \def \bisbn  #1{ISBN #1}\fi
\ifx \binits  \undefined \def \binits#1{#1}\fi
\ifx \bauthor  \undefined \def \bauthor#1{#1}\fi
\ifx \batitle  \undefined \def \batitle#1{#1}\fi
\ifx \bjtitle  \undefined \def \bjtitle#1{#1}\fi
\ifx \bvolume  \undefined \def \bvolume#1{\textbf{#1}}\fi
\ifx \byear  \undefined \def \byear#1{#1}\fi
\ifx \bissue  \undefined \def \bissue#1{#1}\fi
\ifx \bfpage  \undefined \def \bfpage#1{#1}\fi
\ifx \blpage  \undefined \def \blpage #1{#1}\fi
\ifx \burl  \undefined \def \burl#1{\textsf{#1}}\fi
\ifx \doiurl  \undefined \def \doiurl#1{\url{https://doi.org/#1}}\fi
\ifx \betal  \undefined \def \betal{\textit{et al.}}\fi
\ifx \binstitute  \undefined \def \binstitute#1{#1}\fi
\ifx \binstitutionaled  \undefined \def \binstitutionaled#1{#1}\fi
\ifx \bctitle  \undefined \def \bctitle#1{#1}\fi
\ifx \beditor  \undefined \def \beditor#1{#1}\fi
\ifx \bpublisher  \undefined \def \bpublisher#1{#1}\fi
\ifx \bbtitle  \undefined \def \bbtitle#1{#1}\fi
\ifx \bedition  \undefined \def \bedition#1{#1}\fi
\ifx \bseriesno  \undefined \def \bseriesno#1{#1}\fi
\ifx \blocation  \undefined \def \blocation#1{#1}\fi
\ifx \bsertitle  \undefined \def \bsertitle#1{#1}\fi
\ifx \bsnm \undefined \def \bsnm#1{#1}\fi
\ifx \bsuffix \undefined \def \bsuffix#1{#1}\fi
\ifx \bparticle \undefined \def \bparticle#1{#1}\fi
\ifx \barticle \undefined \def \barticle#1{#1}\fi
\bibcommenthead
\ifx \bconfdate \undefined \def \bconfdate #1{#1}\fi
\ifx \botherref \undefined \def \botherref #1{#1}\fi
\ifx \url \undefined \def \url#1{\textsf{#1}}\fi
\ifx \bchapter \undefined \def \bchapter#1{#1}\fi
\ifx \bbook \undefined \def \bbook#1{#1}\fi
\ifx \bcomment \undefined \def \bcomment#1{#1}\fi
\ifx \oauthor \undefined \def \oauthor#1{#1}\fi
\ifx \citeauthoryear \undefined \def \citeauthoryear#1{#1}\fi
\ifx \endbibitem  \undefined \def \endbibitem {}\fi
\ifx \bconflocation  \undefined \def \bconflocation#1{#1}\fi
\ifx \arxivurl  \undefined \def \arxivurl#1{\textsf{#1}}\fi
\csname PreBibitemsHook\endcsname

%%% 1
\bibitem[\protect\citeauthoryear{Arrow et~al.}{2010}]{arrow_handbook_2010}
\begin{bbook}
\bauthor{\bsnm{Arrow}, \binits{K.J.}},
\bauthor{\bsnm{Sen}, \binits{A.}},
\bauthor{\bsnm{Suzumura}, \binits{K.}}:
\bbtitle{Handbook of Social Choice and Welfare}
vol. \bseriesno{2}.
\bpublisher{Elsevier},
\blocation{Amsterdam}
(\byear{2010})
\end{bbook}
\endbibitem

%%% 2
\bibitem[\protect\citeauthoryear{Brandt et~al.}{2016}]{brandt_handbook_2016}
\begin{bbook}
\beditor{\bsnm{Brandt}, \binits{F.}},
\beditor{\bsnm{Conitzer}, \binits{V.}},
\beditor{\bsnm{Endriss}, \binits{U.}},
\beditor{\bsnm{Lang}, \binits{J.}},
\beditor{\bsnm{Procaccia}, \binits{A.D.}} (eds.):
\bbtitle{Handbook of Computational Social Choice}.
\bpublisher{Cambridge University Press},
\blocation{Cambridge}
(\byear{2016})
\end{bbook}
\endbibitem

%%% 3
\bibitem[\protect\citeauthoryear{F{\"u}rnkranz and H{\"u}llermeier}{2003}]{furnkranz2003pairwise}
\begin{bchapter}
\bauthor{\bsnm{F{\"u}rnkranz}, \binits{J.}},
\bauthor{\bsnm{H{\"u}llermeier}, \binits{E.}}:
\bctitle{Pairwise preference learning and ranking}.
In: \bbtitle{Machine Learning: ECML 2003},
pp. \bfpage{145}--\blpage{156}
(\byear{2003}).
\bcomment{Springer}
\end{bchapter}
\endbibitem

%%% 4
\bibitem[\protect\citeauthoryear{Askell et~al.}{2021}]{askell2021general}
\begin{botherref}
\oauthor{\bsnm{Askell}, \binits{A.}},
\oauthor{\bsnm{Bai}, \binits{Y.}},
\oauthor{\bsnm{Chen}, \binits{A.}},
\oauthor{\bsnm{Drain}, \binits{D.}},
\oauthor{\bsnm{Ganguli}, \binits{D.}},
\oauthor{\bsnm{Henighan}, \binits{T.}},
\oauthor{\bsnm{Jones}, \binits{A.}},
\oauthor{\bsnm{Joseph}, \binits{N.}},
\oauthor{\bsnm{Mann}, \binits{B.}},
\oauthor{\bsnm{DasSarma}, \binits{N.}}, et al.:
A general language assistant as a laboratory for alignment.
arXiv preprint arXiv:2112.00861
(2021)
\end{botherref}
\endbibitem

%%% 5
\bibitem[\protect\citeauthoryear{K{\"o}pf et~al.}{2024}]{kopf2024openassistant}
\begin{botherref}
\oauthor{\bsnm{K{\"o}pf}, \binits{A.}},
\oauthor{\bsnm{Kilcher}, \binits{Y.}},
\oauthor{\bsnm{R{\"u}tte}, \binits{D.}},
\oauthor{\bsnm{Anagnostidis}, \binits{S.}},
\oauthor{\bsnm{Tam}, \binits{Z.R.}},
\oauthor{\bsnm{Stevens}, \binits{K.}},
\oauthor{\bsnm{Barhoum}, \binits{A.}},
\oauthor{\bsnm{Nguyen}, \binits{D.}},
\oauthor{\bsnm{Stanley}, \binits{O.}},
\oauthor{\bsnm{Nagyfi}, \binits{R.}}, et al.:
Openassistant conversations-democratizing large language model alignment.
Advances in Neural Information Processing Systems
\textbf{36}
(2024)
\end{botherref}
\endbibitem

%%% 6
\bibitem[\protect\citeauthoryear{Mishra}{2023}]{mishra2023ai}
\begin{botherref}
\oauthor{\bsnm{Mishra}, \binits{A.}}:
{AI} alignment and social choice: Fundamental limitations and policy implications.
arXiv preprint arXiv:2310.16048
(2023)
\end{botherref}
\endbibitem

%%% 7
\bibitem[\protect\citeauthoryear{Zwicker}{2016}]{zwicker2016introduction}
\begin{botherref}
\oauthor{\bsnm{Zwicker}, \binits{W.S.}}:
Introduction to the theory of voting.
Handbook of computational social choice
\textbf{2}
(2016)
\end{botherref}
\endbibitem

%%% 8
\bibitem[\protect\citeauthoryear{May}{1952}]{May52a}
\begin{barticle}
\bauthor{\bsnm{May}, \binits{K.}}:
\batitle{A set of independent, necessary and sufficient conditions for simple majority decisions}.
\bjtitle{Econometrica}
\bvolume{20}(\bissue{4}),
\bfpage{680}--\blpage{684}
(\byear{1952})
\end{barticle}
\endbibitem

%%% 9
\bibitem[\protect\citeauthoryear{Brandt et~al.}{2016}]{brandt_tournament_2016}
\begin{bchapter}
\bauthor{\bsnm{Brandt}, \binits{F.}},
\bauthor{\bsnm{Brill}, \binits{M.}},
\bauthor{\bsnm{Harrenstein}, \binits{P.}}:
\bctitle{Tournament solutions}.
In: \beditor{\bsnm{Brandt}, \binits{F.}},
\beditor{\bsnm{Conitzer}, \binits{V.}},
\beditor{\bsnm{Endriss}, \binits{U.}},
\beditor{\bsnm{Lang}, \binits{J.}},
\beditor{\bsnm{Procaccia}, \binits{A.D.}} (eds.)
\bbtitle{Handbook of Computational Social Choice}.
\bpublisher{Cambridge University Press},
\blocation{Cambridge}
(\byear{2016}).
\bcomment{Chap. 3}
\end{bchapter}
\endbibitem

%%% 10
\bibitem[\protect\citeauthoryear{Laslier}{1997}]{Lasl97a}
\begin{bbook}
\bauthor{\bsnm{Laslier}, \binits{J.-F.}}:
\bbtitle{Tournament Solutions and Majority Voting},
(\byear{1997})
\end{bbook}
\endbibitem

%%% 11
\bibitem[\protect\citeauthoryear{Schwartz}{1986}]{schwartz_logic_1986}
\begin{bbook}
\bauthor{\bsnm{Schwartz}, \binits{T.}}:
\bbtitle{The Logic of Collective Choice}.
\bpublisher{Columbia University Press},
\blocation{New York}
(\byear{1986})
\end{bbook}
\endbibitem

%%% 12
\bibitem[\protect\citeauthoryear{Good}{1971}]{Good71a}
\begin{barticle}
\bauthor{\bsnm{Good}, \binits{I.J.}}:
\batitle{A note on {C}ondorcet sets}.
\bjtitle{Public Choice}
\bvolume{10}(\bissue{1}),
\bfpage{97}--\blpage{101}
(\byear{1971})
\end{barticle}
\endbibitem

%%% 13
\bibitem[\protect\citeauthoryear{Smith}{1973}]{Smit73a}
\begin{barticle}
\bauthor{\bsnm{Smith}, \binits{J.H.}}:
\batitle{Aggregation of preferences with variable electorate}.
\bjtitle{Econometrica}
\bvolume{41}(\bissue{6}),
\bfpage{1027}--\blpage{1041}
(\byear{1973})
\end{barticle}
\endbibitem

%%% 14
\bibitem[\protect\citeauthoryear{Fischer et~al.}{2016}]{fischer_weighted_2016}
\begin{bchapter}
\bauthor{\bsnm{Fischer}, \binits{F.}},
\bauthor{\bsnm{Hudry}, \binits{O.}},
\bauthor{\bsnm{Niedermeier}, \binits{R.}}:
\bctitle{Weighted tournament solutions}.
In: \beditor{\bsnm{Brandt}, \binits{F.}},
\beditor{\bsnm{Conitzer}, \binits{V.}},
\beditor{\bsnm{Endriss}, \binits{U.}},
\beditor{\bsnm{Lang}, \binits{J.}},
\beditor{\bsnm{Procaccia}, \binits{A.D.}} (eds.)
\bbtitle{Handbook of Computational Social Choice}.
\bpublisher{Cambridge University Press},
\blocation{Cambridge}
(\byear{2016}).
\bcomment{Chap. 4}
\end{bchapter}
\endbibitem

%%% 15
\bibitem[\protect\citeauthoryear{Heitzig}{2002}]{heitzig2002social}
\begin{botherref}
\oauthor{\bsnm{Heitzig}, \binits{J.}}:
Social choice under incomplete, cyclic preferences.
arXiv preprint math/0201285
(2002)
\end{botherref}
\endbibitem

%%% 16
\bibitem[\protect\citeauthoryear{Dung}{1995}]{dung1995acceptability}
\begin{barticle}
\bauthor{\bsnm{Dung}, \binits{P.M.}}:
\batitle{On the acceptability of arguments and its fundamental role in nonmonotonic reasoning, logic programming and n-person games}.
\bjtitle{Artificial intelligence}
\bvolume{77}(\bissue{2}),
\bfpage{321}--\blpage{357}
(\byear{1995})
\end{barticle}
\endbibitem

%%% 17
\bibitem[\protect\citeauthoryear{Holliday and Pacuit}{2023}]{holliday2020split}
\begin{barticle}
\bauthor{\bsnm{Holliday}, \binits{W.H.}},
\bauthor{\bsnm{Pacuit}, \binits{E.}}:
\batitle{Split cycle: a new {C}ondorcet-consistent voting method independent of clones and immune to spoilers}.
\bjtitle{Public Choice}
\bvolume{197},
\bfpage{1}--\blpage{62}
(\byear{2023})
\end{barticle}
\endbibitem

%%% 18
\bibitem[\protect\citeauthoryear{Tideman}{1987}]{tideman1987independence}
\begin{barticle}
\bauthor{\bsnm{Tideman}, \binits{T.N.}}:
\batitle{Independence of clones as a criterion for voting rules}.
\bjtitle{Social Choice and Welfare}
\bvolume{4},
\bfpage{185}--\blpage{206}
(\byear{1987})
\end{barticle}
\endbibitem

%%% 19
\bibitem[\protect\citeauthoryear{Schulze}{2011}]{schulze2011new}
\begin{barticle}
\bauthor{\bsnm{Schulze}, \binits{M.}}:
\batitle{A new monotonic, clone-independent, reversal symmetric, and {C}ondorcet-consistent single-winner election method}.
\bjtitle{Social choice and Welfare}
\bvolume{36},
\bfpage{267}--\blpage{303}
(\byear{2011})
\end{barticle}
\endbibitem

%%% 20
\bibitem[\protect\citeauthoryear{Holliday and Pacuit}{2023}]{holliday2023stable}
\begin{botherref}
\oauthor{\bsnm{Holliday}, \binits{W.H.}},
\oauthor{\bsnm{Pacuit}, \binits{E.}}:
Stable voting.
Constitutional Political Economy,
421--433
(2023)
\end{botherref}
\endbibitem

%%% 21
\bibitem[\protect\citeauthoryear{Fishburn}{1973}]{fishburn_theory_1973}
\begin{bbook}
\bauthor{\bsnm{Fishburn}, \binits{P.C.}}:
\bbtitle{The Theory of Social Choice}.
\bpublisher{Princeton University Press},
\blocation{Princeton}
(\byear{1973})
\end{bbook}
\endbibitem

%%% 22
\bibitem[\protect\citeauthoryear{Richelson}{1978}]{Rich78c}
\begin{barticle}
\bauthor{\bsnm{Richelson}, \binits{J.}}:
\batitle{A characterization result for the plurality rule}.
\bjtitle{Journal of Economic Theory}
\bvolume{19}(\bissue{2}),
\bfpage{548}--\blpage{550}
(\byear{1978})
\end{barticle}
\endbibitem

%%% 23
\bibitem[\protect\citeauthoryear{Ching}{1996}]{Chin96a}
\begin{barticle}
\bauthor{\bsnm{Ching}, \binits{S.}}:
\batitle{A simple characterization of plurality rule}.
\bjtitle{Journal of Economic Theory}
\bvolume{71}(\bissue{1}),
\bfpage{298}--\blpage{302}
(\byear{1996})
\end{barticle}
\endbibitem

%%% 24
\bibitem[\protect\citeauthoryear{Chebotarev and Shamis}{1998}]{ChSh98a}
\begin{barticle}
\bauthor{\bsnm{Chebotarev}, \binits{P.Y.}},
\bauthor{\bsnm{Shamis}, \binits{E.}}:
\batitle{Characterizations of scoring methods for preference aggregation}.
\bjtitle{Annals of Operations Research}
\bvolume{80},
\bfpage{299}--\blpage{332}
(\byear{1998})
\end{barticle}
\endbibitem

%%% 25
\bibitem[\protect\citeauthoryear{Gonzalez et~al.}{2019}]{GLS19a}
\begin{barticle}
\bauthor{\bsnm{Gonzalez}, \binits{S.}},
\bauthor{\bsnm{Laruelle}, \binits{A.}},
\bauthor{\bsnm{Solal}, \binits{P.}}:
\batitle{Dilemma with approval and disapproval votes}.
\bjtitle{Social Choice and Welfare}
\bvolume{53},
\bfpage{497}--\blpage{517}
(\byear{2019})
\end{barticle}
\endbibitem

%%% 26
\bibitem[\protect\citeauthoryear{{\"O}zt{\"u}rk}{2020}]{Oztu20a}
\begin{barticle}
\bauthor{\bsnm{{\"O}zt{\"u}rk}, \binits{Z.E.}}:
\batitle{Consistency of scoring rules: {A} reinvestigation of composition-consistency}.
\bjtitle{International Journal of Game Theory}
\bvolume{49},
\bfpage{801}--\blpage{831}
(\byear{2020})
\end{barticle}
\endbibitem

%%% 27
\bibitem[\protect\citeauthoryear{Brandl and Peters}{2022}]{BrPe22a}
\begin{barticle}
\bauthor{\bsnm{Brandl}, \binits{F.}},
\bauthor{\bsnm{Peters}, \binits{D.}}:
\batitle{Approval voting under dichotomous preferences: A catalogue of characterizations}.
\bjtitle{Journal of Economic Theory}
\bvolume{205},
\bfpage{105532}
(\byear{2022})
\end{barticle}
\endbibitem

%%% 28
\bibitem[\protect\citeauthoryear{Greaves and Cotton-Barratt}{2023}]{GrCB23a}
\begin{botherref}
\oauthor{\bsnm{Greaves}, \binits{H.}},
\oauthor{\bsnm{Cotton-Barratt}, \binits{O.}}:
A bargaining-theoretic approach to moral uncertainty.
Journal of Moral Philosophy
(2023)
\end{botherref}
\endbibitem

%%% 29
\bibitem[\protect\citeauthoryear{Heitzig}{2004}]{river2004mailinglist}
\begin{botherref}
\oauthor{\bsnm{Heitzig}, \binits{J.}}:
Condorcet Trees.
\url{http://lists.electorama.com/pipermail/election-methods-electorama.com/2004-October/014018.html}.
Posted on the Election Methods mailing list, October 21, 2004
(2004)
\end{botherref}
\endbibitem

%%% 30
\bibitem[\protect\citeauthoryear{Zavist and Tideman}{1989}]{zavist1989complete}
\begin{barticle}
\bauthor{\bsnm{Zavist}, \binits{T.M.}},
\bauthor{\bsnm{Tideman}, \binits{T.N.}}:
\batitle{Complete independence of clones in the ranked pairs rule}.
\bjtitle{Social Choice and Welfare}
\bvolume{6}(\bissue{2}),
\bfpage{167}--\blpage{173}
(\byear{1989})
\end{barticle}
\endbibitem

%%% 31
\bibitem[\protect\citeauthoryear{Brill and Fischer}{2012}]{brill_price_2012}
\begin{bchapter}
\bauthor{\bsnm{Brill}, \binits{M.}},
\bauthor{\bsnm{Fischer}, \binits{F.}}:
\bctitle{The price of neutrality for the ranked pairs method}.
In: \bbtitle{Proceedings of the 26th AAAI Conference on Artificial Intelligence (AAAI-12)},
pp. \bfpage{1299}--\blpage{1305}.
\bpublisher{AAAI Press},
\blocation{Palo Alto}
(\byear{2012})
\end{bchapter}
\endbibitem

%%% 32
\bibitem[\protect\citeauthoryear{Debord}{1987}]{Debo87a}
\begin{barticle}
\bauthor{\bsnm{Debord}, \binits{B.}}:
\batitle{Caract{\'e}risation des matrices des pr{\'e}f{\'e}rences nettes et m{\'e}thodes d'agr{\'e}gation associ{\'e}es}.
\bjtitle{Math{\'e}matiques et sciences humaines}
\bvolume{97},
\bfpage{5}--\blpage{17}
(\byear{1987})
\end{barticle}
\endbibitem

\end{thebibliography}
%% if required, the content of .bbl file can be included here once bbl is generated
%%\input sn-article.bbl

\clearpage
\appendix
\section*{Appendix}
The supplementary material consists of three parts. In \Cref{sec:app_baseballexample} we present a real life preference profile with 14 alternatives and the diagrams of River and Ranked Pairs. In \Cref{subsec:app_profiles} we give the preference profiles for the margin graphs in \Cref{fig:IPDA_otherSC} and argue why there exists a preference profile for the margin graph in \Cref{fig:example_SC_functions,fig:example_definition}. Lastly, in \Cref{subsec:app_axioms} we present the definitions and proofs omitted in \Cref{sec:Properties} for the basic axioms and for IPDA and IQDA in \Cref{subsec:app_IPDA}.

\section{Baseball Example} \label{sec:app_baseballexample}
\begin{figure*}[ht]
    \centering
    \includegraphics[width=\textwidth]{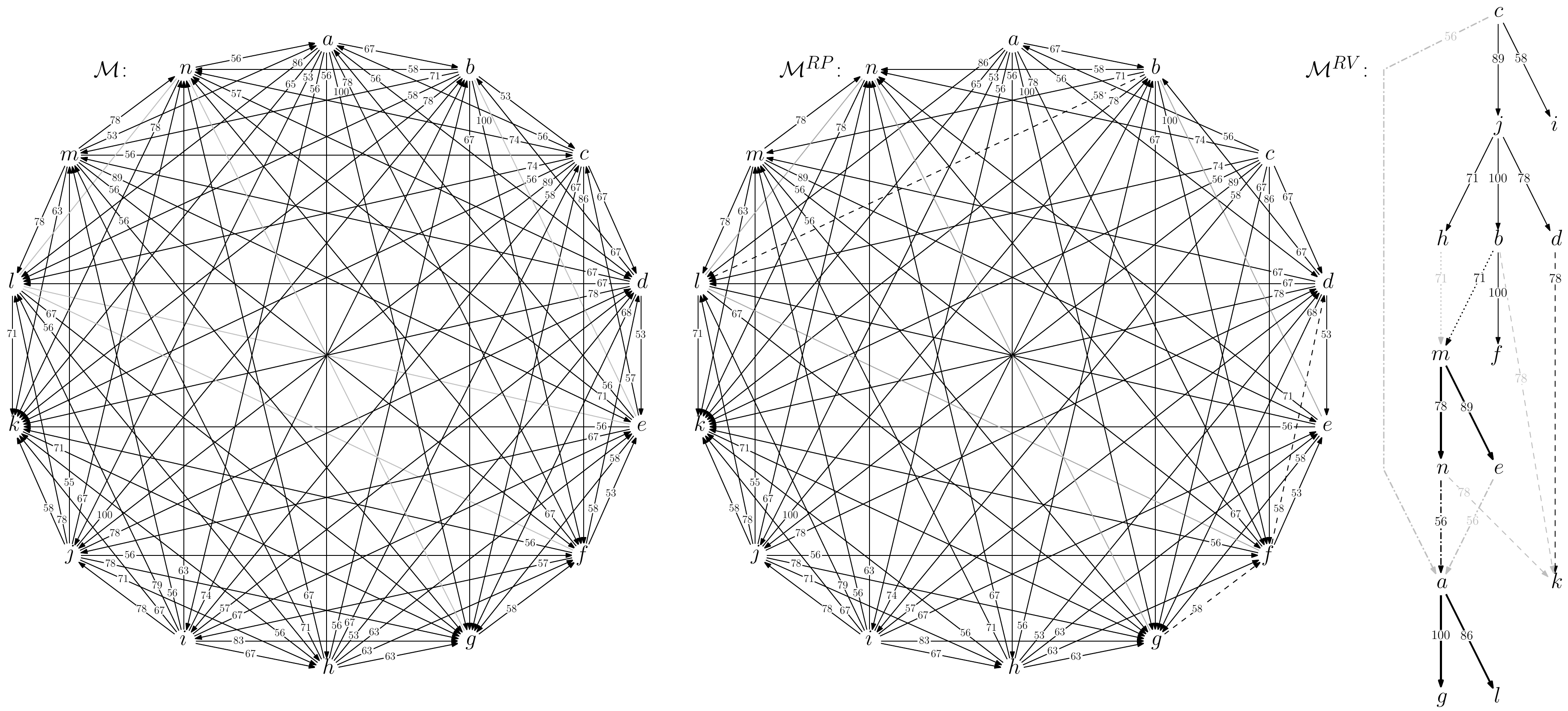}
    \caption{Large real-life example with 14 alternatives. From left to right are the margin graph, the Ranked Pairs diagram and the River diagram. Margin-zero edges are drawn in gray and all other margins are given on the edges. Lines dashed or dotted of the same kind illustrate that this set of edges would have to be decided by a tiebreaker and that only one of those edges can be in the diagram.}
    \label{fig:baseball_example}
\end{figure*}
We present a real-world example using data from a large tournament with 14 alternatives, specifically the American League baseball standings from 2004, taken from the ESPN website.\footnote{\url{https://www.espn.com/mlb/standings/grid/_/year/2004}}
In this standings grid, each cell entry $(i,j)$ represents the winning percentage of team $i$ against team $j$, calculated by dividing the number of wins by the total number of games played between the two teams.

In \Cref{fig:baseball_example}, on the left is the margin graph \margingraph\ derived from the 2004 American League baseball standings, featuring 14 alternatives labeled $A=\{a,b,\dots,m,n\}$. Gray edges indicate a margin of 0, while all other edges are labeled with their respective margins. On the middle  is the Ranked Pairs diagram $\margingraph^{\RP}$ and on the right the River diagram $\margingraph^{\RV}$ laid out as a tree. 

The Ranked Pairs diagram contains nearly every defeat from the margin graph, leading to a cluttered and complex diagram, as well as requiring numerous cycle checks during computation. 
In contrast, the River diagram keeps only a single edge among the alternatives $f$, $g$, and $k$, which are defeated 9, 12, and 13 times, respectively, which are all kept by Ranked Pairs. The River diagram is accordingly much easier to comprehend.

Dashed or dotted edges of the same style in the diagram indicate decisions that require a tiebreaker, with only one edge from each set being included in the final diagram.
Observe that the River diagram involves three such tiebreaking decisions, while the Ranked Pairs diagram has only one. However, in this particular case, the specific choice of edge within each set does not affect the final outcome.

Additionally, none of the five zero-margin edges are included in the River diagram, whereas one direction of each could potentially appear in the Ranked Pairs diagram, as seen with the edge $(n,g)$. Since these zero-margin edges do not influence the outcome, we left them gray in the diagrams.

\section{Preference Profiles for Examples} \label{subsec:app_profiles}
\smallskip
\paragraph{Preference Profile for \Cref{fig:example_SC_functions,fig:example_definition}}
It is known that for any margin graph where all margins are even (or all margins are odd), one can construct a corresponding profile of transitive preferences \citep{Debo87a}.
%More formally, this implies that for any weighted directed tournament graph with arbitrary even, integer-valued weights, we can construct a preference profile that has this tournament graph as its margin graph. 
%
%Note that we do not fix the number of voters, as it depends on the graph's weights rather than just the number of alternatives. Additionally, we want to mention that this construction not necessarily yields a minimum-size preference profile.
%\newcommand{\reverse}{R}
%
%Let $\alpha_{A'}$ denote the alphabetical order of a subset $A'\subseteq A$ of alternatives, and let $\alpha_{A'}^{\reverse}$ denote the reversed order.
%For pair of alternatives $x,y$ with margin $\margin{x,y}=2k$, we generate $k$ voters with the ranking $(x,y,\alpha_{A\setminus\{x,y\}})$ and $k$ voters with the ranking $(\alpha_{A\setminus\{x,y\}}^{\reverse},x,y)$.
%
%These $2k$ preferences yield $\margin{x,y}=2k$ and $\margin{z,z'}=0$ for all $z,z'\in A$ with $\{z,z'\}\neq\{x,y\}$.
%By repeating this process for each pair, we enforce the specified, even margin for this pair, without affecting the margin of the other pairs.
%This approach produces a preference profile with $\sum_{x\in A, y\in A\setminus\{x\}}\margin{x,y}$ voters.

\paragraph{Preference Profiles for \Cref{fig:IPDA_otherSC}}
% Note that this is not necessarily the lowest number of voters to achieve this margin graph. Furthermore, this does not directly work for margin graphs with Pareto-dominated alternatives.
The above-mentioned process for constructing profiles cannot be used for the margin graphs that demonstrate the violation of independence of Pareto-dominated alternatives for Split Cycle and its refinements apart from River.
For such margin graphs, alternative $b$ must Pareto-dominate alternative $a$, i.e., the margin of $b$ over $a$ must equal the number of voters. Since the above-mentioned construction process involves changing the number of voters during construction, it cannot be applied here.
Therefore, we provide specific preference profiles for the presented margin graphs.

\medskip

\noindent \textbf{Preference Profile $\Pref_1$:}
\[
\begin{array}{llllllll}
7 & 5 & 4 & 2 & 1 & 1 & 1 & 1 \\
\hline & \\[-2ex]
c & b & d & b & d & c & b & d \\
b & a & c & a & b & b & c & b \\
a & d & b & c & a & d & a & c \\
d & c & a & d & c & a & d & a 
\end{array}
\]
\noindent \textbf{Preference Profile $\Pref_2$:}
\[
\begin{array}{lllllllll}
7 & 5 & 2 & 2 & 2 & 1 & 1 & 1 & 1 \\
\hline & \\[-2ex]
d & b & c & b & c & c & b & d & b \\
b & a & d & a & d & b & d & c & c \\
a & c & b & d & b & d & a & b & a \\
c & d & a & c & a & a & c & a & d 
\end{array}
\]
\noindent \textbf{Preference Profile $\Pref_3$:}
\[
\begin{array}{clllllllllllll}
12 & 8 & 8 & 7 & 6 & 5 & 4 & 3 & 3 & 2 & 2 & 2 & 1 & 1 \\
\hline & \\[-2ex]
e  & c & d & d & e & e & b & c & b & b & d & d & d & c \\
b  & e & a & c & d & c & a & d & d & c & b & c & e & b \\
a  & b & b & e & b & d & c & a & a & a & a & b & a & a \\
c  & a & c & b & a & b & d & b & c & d & c & e & b & d \\
d  & d & e & a & c & a & e & e & e & e & e & a & c & e 
\end{array}
\]
\noindent \textbf{Preference Profile $\Pref_4$:}
\[
\begin{array}{cllllllllllll}
8 & 6 & 5 & 4 & 3 & 2 & 2 & 2 & 2 & 1 & 1 \\
\hline & \\[-2ex]
 d & b & e & b & d & b & b & d & d & d & c \\
 a & a & c & a & a & d & c & b & c & e & b \\
 b & e & d & c & b & a & a & a & b & a & a \\
 c & d & b & d & c & c & d & c & e & b & d \\
 e & c & a & e & e & e & e & e & a & c & e 
\end{array}
\]

\renewcommand{\restatemarker}{}
\section{Definitions and Proofs for Basic Axioms} \label{subsec:app_axioms}

    \subsection{Anonymity and Neutrality}
    %%%%%%%%%%%%%%%%%%%%%%% Anonymity and neutrality %%%%%%%%%%%%%%%%%%%%%%%
    \begin{definition} [Anonymity and Neutrality] \label{def:AnonymityandNeutrality} A social choice function \SCF satisfies
       \begin{enumerate}
           \item \defstyle{anonymity} if for any permutation $\pi\colon N \rightarrow N$ of voters and all $A, \Pref, \Pref'$ for which $\succ'_i = \succ_{\pi(i)}$ holds for all $i\in N$, we have $\SCF(\Pref,A) = \SCF(\Pref',A)$;
           \item \defstyle{neutrality} if for any bijection $\pi\colon A\rightarrow B$ of alternatives and all $A,\Pref,\Pref'$ for which $(xP_iy) \leftrightarrow (\pi(x)\Pref'_i\pi(y))$ holds for all $i\in N$ and $x,y\in A$, we have $\pi(\SCF(\Pref,A)) = \SCF(\Pref',\pi(A))$.
       \end{enumerate}
    \end{definition}
    For River, both anonymity and neutrality are dependent on the tiebreaker in the same way as for Ranked Pairs: For uniquely-weighted profiles, River is anonymous and neutral, while for general profiles it depends on the equipped tiebreaker. Note that each deterministic tiebreaker must either violate anonymity or neutrality.
    \begin{proposition} \label{prop:AnonymityNeutrality}
        River satisfies anonymity and neutrality in uniquely-weighted profiles.
    \end{proposition}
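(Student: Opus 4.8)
The plan is to reduce both claims to one fact: the majority margins are precisely the data preserved (up to relabeling) under a permutation of voters or a bijection of alternatives, and on a uniquely weighted profile the entire River procedure is a deterministic function of the margin graph alone, since Step~1 invokes no tiebreaker. Throughout I use that a uniquely weighted profile has pairwise distinct nonzero margins, so the decreasing-margin order of edges in Step~1 is unique and strict.

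For \emph{anonymity}, let $\pi\colon N\to N$ be a voter permutation with $\succ'_i=\succ_{\pi(i)}$ for all $i$. For every pair $x,y\in A$ the sets $\{i:x\succ'_iy\}$ and $\{i:x\succ_iy\}$ are in bijection via $\pi$, hence $\marginExt{x,y}{\Pref'}=\marginExt{x,y}{\Pref}$ and $\margingraph_{\Pref'}=\margingraph_\Pref$ as weighted graphs. Then $\Pref'$ is again uniquely weighted, Step~1 fixes the same edge order on both profiles, and Steps~2--4 run identically; thus $\margingraphRVExt{\Pref'}=\margingraphRVExt{\Pref}$ and in particular $\RV(\Pref')=\RV(\Pref)$.

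For \emph{neutrality}, let $\pi\colon A\to B$ be a bijection with $(x\succ_iy)\leftrightarrow(\pi(x)\succ'_i\pi(y))$ for all $i\in N$ and all $x,y\in A$. The same counting argument gives $\marginExt{\pi(x),\pi(y)}{\Pref'}=\marginExt{x,y}{\Pref}$, so $\pi$ is an isomorphism of weighted directed graphs from $\margingraph_\Pref$ onto $\margingraph_{\Pref'}$; in particular $\Pref'$ is uniquely weighted, and an edge $e=(x,y)$ occupies the same position in the decreasing-margin order for $\Pref$ as $\pi(e)=(\pi(x),\pi(y))$ does for $\Pref'$. I would then show by induction on $k$ that, after processing the first $k$ edges, the partial graph built for $\Pref'$ equals the $\pi$-image of the partial graph built for $\Pref$. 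The case $k=0$ is the empty graph on $B=\pi(A)$. For the step, write $G$ for the current partial graph of $\Pref$ and $e$ for the $(k+1)$-st edge: a directed-graph isomorphism sends directed cycles to directed cycles and preserves in-degrees, so adding $e$ to $G$ creates a cycle (respectively a second in-edge at the head of $e$) exactly when adding $\pi(e)$ to $\pi(G)$ does; hence $(Cy)$ and $(Br)$ fire simultaneously on the two sides and the invariant persists. Taking $k=\lvert E(\margingraph_\Pref)\rvert$ yields $\margingraphRVExt{\Pref'}=\pi(\margingraphRVExt{\Pref})$, and by \Cref{prop:Resolute} each diagram has a unique source; since $\pi$ maps the source of $\margingraphRVExt{\Pref}$ to the source of $\margingraphRVExt{\Pref'}$, we obtain $\pi(\RV(\Pref))=\RV(\Pref')$.

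I do not expect a genuine obstacle. The only point deserving (minor) care is the inductive step for neutrality, namely that the rejection rules $(Cy)$ and $(Br)$ are invariant under the relabeling $\pi$; this is immediate once one records that $\pi$ is a vertex bijection preserving edge directions, hence preserving both the existence of directed cycles and every in-degree, and that on uniquely weighted profiles no tiebreaker is consulted, so the edge-processing orders on $\Pref$ and $\Pref'$ correspond perfectly under $\pi$. Everything else is routine bookkeeping.
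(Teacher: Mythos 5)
Your proof is correct and takes the route the paper intends: the paper's own ``proof'' simply states that the reasoning is completely analogous to that for Ranked Pairs, i.e.\ that on uniquely weighted profiles the procedure needs no tiebreaker and is a deterministic function of the margin graph, which is invariant under voter permutations and equivariant under relabelings of the alternatives. Your write-up merely spells out the details the paper omits, including the (correct) observation that the additional branching condition $(Br)$, like the cycle condition $(Cy)$, is preserved under directed-graph isomorphisms, so the induction goes through.
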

    The reasoning for River is completely analogous to the arguments for Ranked Pairs \cite{tideman1987independence}.
    
    One example of an anonymous and neutral but non-deterministic tiebreaker is ordering edges with the same margin uniformly at random.
    An example of a deterministic tiebreaker that is neutral and Pareto-consistent but not anonymous or quasi-Pareto-consistent is to order edges lexicographically according to the ranking of the first voter.
    However, if we choose the voter, whose ranking we take as the base of the lexicographic ordering, at random, this yields a deterministic, neutral, and Pareto-consistent tiebreaker.
    
    An example of a deterministic tiebreaker that is anonymous and consistent but not neutral or Pareto-consistent is to order edges lexicographically according to a pre-specified fixed ordering on the universe of all possible alternatives. 
    An example of a deterministic tiebreaker that is quasi-Pareto-consistent and neutral but not anonymous is to first determine the set $Q$ of quasi-Pareto domination edges and the set $\bar Q$ of all other edges, then order both of them internally using any deterministic and neutral tiebreaker, and finally put $Q$ before $\bar Q$.
%    \markus{I wouldn't employ a random tiebreaker here, as we are talking about deterministic SCFs in this paper. That means that neutrality and anonymity cannot be achived together. Probably it is also discussed by \cite{tideman1987independence} how to get anonymity (without neutrality) and neutrality (without anonymity)? Something like ``use a fixed order'' or ``use the preferences of the first voter''...? }

    The existence of an anonymous and neutral tiebreaker implies the following:
    \begin{proposition}
         River satisfies anonymity and neutrality in general profiles when equipped with an appropriate tiebreaker. 
    \end{proposition}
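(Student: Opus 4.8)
The plan is to instantiate River with the randomised tiebreaker $\tau^*$ that, on any profile $\Pref$, lists the edges of $\margingraph_\Pref$ by strictly decreasing margin and orders each block of equal-margin edges uniformly at random (the anonymous and neutral tiebreaker mentioned above, ``sort the edges uniformly at random for each new profile''). Accordingly, ``River equipped with $\tau^*$'' is a \emph{probabilistic} social choice function: its output is the random source of the River diagram $\margingraphRV$ built from $\tau^*(\Pref)$ in Step~3 of the definition of River, and the axioms are read in the standard way for such functions, namely anonymity as invariance of this output distribution under voter permutations and neutrality as $\pi$-equivariance of this output distribution under relabellings $\pi$ of the alternatives. (One must read the equalities ``$\SCF(\Pref,A)=\SCF(\Pref',A)$'' and ``$\pi(\SCF(\Pref,A))=\SCF(\Pref',\pi(A))$'' of \Cref{def:AnonymityandNeutrality} as equalities of output distributions.)

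First I would record two elementary equivariance facts about the margin graph. \emph{Voter-blindness:} if $\Pref'$ arises from $\Pref$ by a permutation of voters, then $\margingraph_{\Pref'}=\margingraph_\Pref$ as weighted digraphs, since each margin $m(x,y)$ depends only on the multiset of individual rankings. \emph{Label-equivariance:} if $\pi\colon A\to B$ is a bijection and $\Pref'$ is the $\pi$-relabelled profile (so that $xP_iy \leftrightarrow \pi(x)P'_i\pi(y)$), then $m_{\Pref'}(\pi(x),\pi(y))=m_{\Pref}(x,y)$ for all $x,y\in A$, hence $\margingraph_{\Pref'}$ is precisely the $\pi$-image of $\margingraph_{\Pref}$. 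From these, $\tau^*$ inherits the corresponding distributional statements: $\tau^*(\Pref')$ has the same distribution as $\tau^*(\Pref)$ after a voter permutation; and $\tau^*(\Pref')$ is distributed as the order obtained from $\tau^*(\Pref)$ by applying $\pi$ to the endpoints of every edge, because the recipe ``decreasing margin, uniform within each tie class'' is defined purely from the weighted digraph and commutes with relabelling its vertices.

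Next I would check that the \emph{deterministic} map sending an edge ordering $\sigma$ (compatible with decreasing margins) to the River diagram it produces is itself label-equivariant: since the cycle condition $(Cy)$ and the branching condition $(Br)$ refer only to the abstract digraph being assembled, running Step~3 on the $\pi$-relabelled order produces exactly the $\pi$-relabelled River diagram, and therefore carries the unique source (the River winner) along by $\pi$. Composing this deterministic equivariance with the two distributional facts about $\tau^*$ yields the proposition: a voter permutation changes neither $\margingraph_\Pref$ nor the distribution of $\tau^*(\Pref)$ and hence not the distribution of the River winner, giving anonymity; a bijection $\pi$ sends $\margingraph_\Pref$ to its $\pi$-image and $\tau^*(\Pref)$ to a copy-in-distribution of its $\pi$-relabelling, and then label-equivariance of Step~3 makes the River winner on $\Pref'$ distributed as $\pi$ applied to the River winner on $\Pref$, giving neutrality.

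Essentially everything here is the routine observation that both the tiebreaking rule and the greedy tree construction are specified without any reference to the names of voters or of alternatives, so the only genuinely delicate point, and the one I would spell out carefully, is the bookkeeping of probabilistic social choice: fixing the interpretation of the axioms for a randomised SCF and justifying that the randomness driving $\tau^*$ on $\Pref$ and on $\Pref'$ may be identified in the natural way, so that the distributional equalities above are literal equalities of push-forward measures.
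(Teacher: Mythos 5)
Your proposal is correct and follows essentially the same route as the paper, which simply invokes the existence of an anonymous and neutral (randomised) tiebreaker---ordering equal-margin edges uniformly at random---and notes that the rest of River's construction depends only on the margin graph. You merely spell out the equivariance bookkeeping (voter-blindness and label-equivariance of the margin graph, of the random tiebreaker's distribution, and of Step~3) that the paper leaves implicit, including the distributional reading of the axioms for a randomised tiebreaker.
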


    % \subsubsection{Reversal Symmetry}
    % %%%%%%%%%%%%%%%%%%%%%%% Reversal Symmetry %%%%%%%%%%%%%%%%%%%%%%%
    % \begin{definition}[Reversal Symmetry] \label{def:ReversalSymmetry}
    %     A social choice function \SCF satisfies \defstyle{reversal symmetry} if for any \Pref with $\SCF(\Pref)=\{x\}$, $x\notin SCF(\Pref^r)$, where $\Pref^r$ is obtained from $\Pref$ by reversing every ballot.
    % \end{definition}
    % \begin{proposition} \label{prop:ReversalSymmetry}
    %     River satisfies reversal symmetry.
    % \end{proposition}
    % \begin{proof}
    % I don't think that Ranked Pairs does not satisfy this, nor does River.
    % % Let River be equipped with an arbitrary tiebreaker and let $\Pref$ be a preference profile such that $\RV(\Pref)=\{x\}$. Then since \RV is a refinement of \SC, $x\in \SC(\Pref)$.
    % % Now, as Split Cycle satisfies reversal symmetry (\cite{holliday2020split}), $x\notin\SC(\Pref^r)$ and also $x\notin\SC(\Pref^r)$.
    
    % % Then $x$ has at least one outgoing edge $(x,y)\in E(\margingraphRVExt{\Pref})$. We claim that $(y,x)\in E(\margingraphRVExt{\Pref^r})$.
    % % Reversing every preference relation leads to reversing the margin of every edge, \ie $\marginExt{x,y}{\Pref}=\marginExt{y,x}{\Pref^r}$. Therefore, the ordering of the edges by increasing margin stays the same, while the direction of the edges are reversed.
    % % Following the River procedure for $\Pref^r$, every edge is added/disregarded just as for the original preferences $\Pref$ and therefore $(y,x)\in E(\margingraphRVExt{\Pref^r})$. Thus, $x\notin\RV(\Pref^r)$.
    % \end{proof} 

    \subsection{Monotonicity}
    %%%%%%%%%%%%%%%%%%%%%%% Monotonicity %%%%%%%%%%%%%%%%%%%%%%%
    \begin{definition} [Monotonicity] \label{def:Monotonicity}
        A social choice function \SCF is \defstyle{monotonic} if for any \Pref and $x\in\SCF(\Pref)$, we have $x\in\SCF(\Pref^{x+})$ where $\Pref^{x+}$ is the preference profile obtained from $\Pref$ by putting the alternative $x$ one position higher in any voters' ballot.
    \end{definition}
    \begin{proposition} \label{prop:Monotonicity}
        River satisfies monotonicity.
    \end{proposition}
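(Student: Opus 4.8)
The plan is to reduce the statement to a single elementary change and then compare the runs of the River algorithm on the two profiles edge by edge. By composing one-voter, one-position raises it suffices to treat the case in which $\Pref^{x+}$ arises from $\Pref$ by a single voter swapping $x$ with the alternative $w$ immediately above it (each such step, applied to a profile where $x$ wins, will be shown to land in one where $x$ still wins). This swap alters exactly one entry of the margin graph, $m_{\Pref^{x+}}(x,w)=m_\Pref(x,w)+2$ (equivalently $m_{\Pref^{x+}}(w,x)=m_\Pref(w,x)-2$), and leaves all other margins untouched. Consequently the River processing order on $\Pref^{x+}$ is the order on $\Pref$ with only the $\{x,w\}$-edge relocated: if $m_\Pref(x,w)>0$ (or $m_\Pref(x,w)\le 0<m_{\Pref^{x+}}(x,w)$) the out-edge $(x,w)$ of $x$ moves weakly \emph{earlier}; if $m_\Pref(w,x)>0$ the in-edge $(w,x)$ of $x$ moves weakly \emph{later}, possibly turning into a weight-$0$ edge or vanishing. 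One works here with a fixed consistent tiebreaker (e.g.\ a lexicographic one), so that only the margin of the $\{x,w\}$-pair is affected.

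The argument rests on two structural observations. Since $x=\RV(\Pref)$, the alternative $x$ is the unique source of the River tree, so $x$ never acquires an incoming edge during the run on $\Pref$; hence every majority edge $(v,x)$ is rejected during that run, and because $x$ never has a parent it is rejected by the cycle condition $(Cy)$, never by $(Br)$ --- equivalently, at the moment $(v,x)$ is processed, $v$ is already a descendant of $x$ in the current forest. The two facts I would exploit are: (i) descendant relations are only ever created, never destroyed, as edges are added; and (ii) $x$ is the root of its own tree at every stage of the run on $\Pref$, since it never gets a parent.

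The argument then splits. If the relocated edge is the in-edge $(w,x)$ of $x$ (moved later, turned into a weight-$0$ edge, or removed), it was already a no-op in the run on $\Pref$ (rejected by $(Cy)$), so deleting it or processing it later changes nothing; by fact~(i), $w$ is still a descendant of $x$ when $(w,x)$ is (re)processed, so it is rejected again, and a short induction shows the two runs build identical River trees, giving $\RV(\Pref^{x+})=\RV(\Pref)=x$. If the relocated edge is the out-edge $(x,w)$ of $x$ (moved earlier), I would compare the two runs step by step through the window between the old and new positions of $(x,w)$, maintaining the invariant that the forest built on $\Pref^{x+}$ agrees with the one built on $\Pref$ except that the subtree hanging below $w$ may already be reattached as a child of $x$. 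Using fact~(ii), the only decision the early presence of $(x,w)$ can flip is the benign one in which the edge that used to give $w$ its parent is now blocked by $(Br)$ and $w$ becomes a child of $x$ instead; any other flipped decision would force the conflicting edge to be an incoming edge of $x$ that was accepted in the run on $\Pref$, which is impossible. Hence the River tree on $\Pref^{x+}$ is the River tree on $\Pref$ with at most the subtree below $w$ reattached under $x$; since $x$ is the root and $w\neq x$, that subtree avoids $x$, so $x$ remains the unique source and $\RV(\Pref^{x+})=x$.

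The main obstacle is exactly the last case: carrying the invariant cleanly through the window between the two positions of $(x,w)$ and verifying that no edge other than $w$'s former parent-edge changes its accept/reject status --- intuitively, that $x$'s subtree only ever grows relative to its own history, so no incoming edge of $x$ can ever be accepted. Everything else (the reduction to a single swap, the description of the reordering, and the in-edge case) is routine bookkeeping; one could alternatively try to adapt Tideman's monotonicity proof for Ranked Pairs, but the tree structure of the River diagram makes the direct comparison above cleaner.
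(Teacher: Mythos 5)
Your overall route is the same as the paper's: reduce to a single adjacent swap (so that only $m(x,w)$ changes, by $2$), then split into the case where the affected edge is an in-edge of $x$ (harmless, since it was rejected anyway) and the case where the out-edge $(x,w)$ moves earlier. The reduction and the in-edge case are fine. The gap is the load-bearing claim in the out-edge case: it is \emph{not} true that the run on $\Pref^{x+}$ can differ from the run on $\Pref$ only by reattaching $w$'s subtree under $x$, nor that any other flipped decision would have to be an accepted in-edge of $x$. Once $(x,w)$ is accepted early, the edge $(p,w)$ that gave $w$ its parent in the run on $\Pref$ is rejected by $(Br)$; but $(p,w)$ may itself be the edge through which a later edge $(u,v)$, with $u,v\neq x$, was rejected by $(Cy)$ in the run on $\Pref$ (via a path $v\to\cdots\to p\to w\to\cdots\to u$). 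In the run on $\Pref^{x+}$ that path no longer exists and $v$ need not reach $x$, so $(u,v)$ is accepted, which by $(Br)$ can then knock out edges into $v$ that were accepted on $\Pref$, and the divergence cascades. Concretely, take $A=\{x,w,p,v,u\}$ with margins $m(v,p)=22$, $m(p,w)=18$, $m(x,w)=16$, $m(w,u)=14$, $m(u,v)=12$, $m(x,v)=10$, $m(x,p)=8$, $m(v,w)=6$, $m(p,u)=4$, $m(u,x)=2$ (all even, hence realizable). The River tree of $\Pref$ is $x\to v\to p\to w\to u$. Raise $x$ over $w$ so that $(x,w)$ is processed before $(p,w)$ (margin $18$ with a tiebreaker favouring $(x,w)$, or after two such raises): the run now accepts $(x,w)$, rejects $(p,w)$ by $(Br)$, then accepts $(u,v)$ at $12$ because its old rebutting path $v\to p\to w\to u$ is gone, and rejects $(x,v)$ at $10$ by $(Br)$; the resulting tree is $x\to w\to u\to v\to p$. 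Neither flipped edge is an in-edge of $x$, and the new tree is not the old tree with $w$'s subtree reattached, so the invariant driving your induction fails (here $x$ still wins, as the proposition asserts, but your argument for that conclusion collapses).

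So the proposal breaks at exactly the step you flag as the main obstacle, and the fix cannot be ``the diagram is preserved up to one reattachment.'' The paper's proof takes the same single-swap, two-case route but makes a weaker claim in the critical case: it only argues that each incoming edge $(z,x)$ remains excluded, because any cycle that excluded $(z,x)$ on $\Pref$ and used the displaced edge $(p,w)$ can be rerouted through the single edge $(x,w)$, whose new margin is at least $m(p,w)\geq m(z,x)$, so $(z,x)$ is still rejected; nothing is claimed about the rest of the diagram. To repair your run-comparison you would need an invariant of that weaker, one-sided kind --- for instance, that at every stage of the processing order every alternative reachable from $x$ in the forest built on $\Pref$ is also reachable from $x$ in the forest built on $\Pref^{x+}$ (which holds in the example above: $x$'s subtree only grows), from which the rejection of all edges $(z,x)$ follows --- rather than equality of the two forests up to moving $w$.
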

    \begin{proof}
        Let $x\in\RV(\Pref)$ and $\Pref_{x+}$ be the preference profile obtained from $\Pref$ by putting $x$ one position higher in any voters' ballot. Let $E$ and $E_+$ denote the edges of the margin graphs of $\Pref$ and $\Pref_{x+}$, respectively, and $E^{\RV}$ and $E^{\RV}_+$ the edges of the corresponding River diagrams.
        
        There is an alternative $y$ with whom $x$ switched position in that voters ballot. Therefore, $\marginExt{x,y}{\Pref_{x+}} = \marginExt{x,y}{\Pref}+2$ is the only change in the margin graph.
        
        If $\marginExt{x,y}{\Pref}<0$, $(y,x)\in E$, but since $x\in\RV(\Pref)$, $(y,x)\notin E^{\RV}$. Therefore, it was the lowest margin of a majority cycle in $E^{\RV}$. As $(y,x)$ is the only edge that is different in $E_+$, that majority cycle is also contained in $E^{\RV}_+$ and $(y,x)\notin E^{\RV}_+$ as it is the lowest margin edge of a majority cycle. 
        
        If $\marginExt{x,y}{\Pref}>0$, $(x,y)\in E$. Now, if there was a relevant cycle containing $(x,y)$ in \Pref, then that is still valid in $\Pref_+$.
        If $(x,y)\notin E^{\RV}$, but $(x,y)\in E_+^{\RV}$, then there is an edge $(w,y)\in E^{\RV}$, but $(w,y)\notin E^{\RV}_+$.
        Now, for any cycle in $E^{\RV}$ eliminating an incoming edge $(z,x)$ of $x$ using the now missing $(w,y)$, we have $\margin{x,y}\geq\margin{w,y}$. So, $(z,x)$ is still the lowest margin edge and not added to $E_+^{\RV}$.
    \end{proof}
    \subsection{Condorcet Winner and Loser}
    %%%%%%%%%%%%%%%%%%%%%%% condorcet %%%%%%%%%%%%%%%%%%%%%%%
    \begin{definition} \label{def:Condorcet}
        A social choice function \SCF satisfies the \defstyle{Condorcet winner} (resp. \defstyle{loser}) \defstyle{criterion} if for any preference profile $\Pref$ and $x\in A$, if $x$ is the Condorcet winner (resp. loser) then $\SCF(\Pref)=\{x\}$ (resp $x\notin \SCF(\Pref)$). If \SCF satisfies the Condorcet winner criterion, we say that \SCF is \defstyle{Condorcet consistent}.
    \end{definition}
    \begin{proposition} \label{prop:Condorcet}
        River satisfies the Condorcet winner and Condorcet loser criterion.
    \end{proposition}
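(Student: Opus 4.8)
The plan is to handle the two criteria separately, since each reduces to an elementary degree observation about the margin graph combined with the structure of the River diagram established in \Cref{prop:Resolute}. Throughout, I would use that $\margingraphRV$ is a subgraph of $\margingraph$ and that River never deletes an edge once added, so a vertex that has an incoming (resp. outgoing) edge at any intermediate stage has one in the final diagram as well.

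For the Condorcet winner criterion: if $x$ is the Condorcet winner then $\margin{x,y}>0$, hence $\margin{y,x}<0$, for every $y\in A\setminus\{x\}$, so there is no majority edge pointing into $x$ in $\margingraph$. Consequently $x$ has in-degree zero in $\margingraphRV$, i.e.\ $x$ is a source and $x\in\RV(\Pref)$; since River is resolute (\Cref{prop:Resolute}), this already forces $\RV(\Pref)=\{x\}$. If one prefers not to invoke resoluteness, the same conclusion follows directly: were some $z\neq x$ also a source of $\margingraphRV$, the majority edge $(x,z)$ (which exists because $\margin{x,z}>0$) would have been accepted when processed — it cannot close a cycle, since any such cycle would need a path ending at the in-degree-zero vertex $x$, and it cannot create a branching, since $z$ has no in-edge — and then $(x,z)\in E(\margingraphRV)$ would contradict $z$ being a source.

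For the Condorcet loser criterion: if $x$ is the Condorcet loser then $\margin{x,y}<0$ for every $y\neq x$, so $x$ has no outgoing majority edge, and hence no outgoing edge in any partial River diagram. The key consequence is that adding an edge of the form $(y,x)$ can never close a cycle, because such a cycle would have to continue out of $x$ along an edge that does not exist. Since there are at least two alternatives, some $y$ satisfies $\margin{y,x}>0$, so at least one majority edge enters $x$; I would take the \emph{first} such edge $(y_0,x)$ that River processes. At that moment $x$ still has in-degree zero (this is the first incoming edge considered), so the branching condition does not block $(y_0,x)$, and by the previous sentence the cycle condition does not block it either. Hence $(y_0,x)\in E(\margingraphRV)$, so $x$ is not a source of $\margingraphRV$, i.e.\ $x\notin\RV(\Pref)$.

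I do not expect a genuine obstacle here: both halves are short once \Cref{prop:Resolute} and the two one-line degree facts are available. The only place warranting care is the Condorcet loser argument, where one must reason about the first-processed incoming edge of $x$ (so that the branching condition is vacuous at that instant) and must use that $x$ has \emph{no outgoing edge at all} to dispatch the cycle condition — an arbitrary incoming edge of $x$ would not obviously do. The parenthetical variant in the Condorcet winner case is a safeguard in case one wants the proof to be self-contained without appealing to resoluteness.
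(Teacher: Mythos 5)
Your proposal is correct and follows essentially the same route as the paper: the Condorcet winner has no incoming majority edges and is therefore the (unique) source of the River tree, while the Condorcet loser has no outgoing edges, so it lies on no cycle and its first-processed incoming edge is necessarily added, preventing it from being the source. Your extra care about the branching condition and the explicit appeal to resoluteness only spell out details the paper leaves implicit.
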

    \begin{proof}
    Let $x\in A$ be a Condorcet winner. Then for every $y\in A\setminus\{x\}$, we have $\margin{x,y}>0$. It follows that $x$ does not have any incoming edges in the margin graph, hence $x\in \RV(\Pref)$.
    Let now $x$ be a Condorcet loser. Then $\margin{x,y}<0$ and $x$ has only incoming edges in the margin graph. As $x$ is not contained in any cycles, one such edge will be in $E(\margingraphRV)$. Hence, $x\notin \RV(\Pref)$.
    \end{proof}
    Observe that this criterion also follows from River being a refinement of Split Cycle.
    
    %%%%%%%%%%%%%%%%%%%%%%% Smith %%%%%%%%%%%%%%%%%%%%%%%
    % \begin{definition}[Smith set]
    %     Let $\Pref$ be a preference profile.
    %     The \defstyle{Smith} or \defstyle{\GETCHA set} is defined as
    %     \[\GETCHA (\Pref) = \bigcap\{B\subseteq A \colon b\succ x \text{ for all }b\in B, x\in A\setminus B\}.\]
    % \end{definition}
    % \begin{definition}[Smith criterion] \label{def:Smith}
    %     A social choice function \SCF satisfies the \defstyle{Smith criterion} if for any preference profile $\Pref$, holds $\SCF(\Pref)\subseteq \GETCHA(\Pref)$. 
    % \end{definition} \noindent
    % River satisfies the Smith criterion, which simply follows from River being a refinement of Split Cycle and the fact that Split Cycle satisfies the Smith criterion: $\RV\subseteq\SC\subseteq \GETCHA$.
    % \begin{corollary}\label{prop:Smith}
    %     River satisfies the Smith criterion.
    % \end{corollary}

    \subsection{Pareto Efficiency}
    %%%%%%%%%%%%%%%%%%%%%%% Pareto %%%%%%%%%%%%%%%%%%%%%%%
    \begin{definition} [Pareto efficiency] \label{def:Pareto}
        A social choice function \SCF is \defstyle{Pareto-efficient} if for any preference profile $\Pref$ and $a,b\in A_\Pref$, if all voters in $N_\Pref$ rank $a$ above $b$, then $b\notin\SCF(\Pref)$.
        %\SCF satisfies \defstyle{Pareto} if for any $\T$ and $x,y\in V(\T)$, $\margin{y,x}=0$ implies $y\notin S(\T)$.
    \end{definition}
    \begin{proposition} \label{prop:Pareto}
        River is Pareto-efficient.
    \end{proposition}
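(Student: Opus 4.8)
The plan is to use a direct structural argument; I note at the outset that a one-line proof is also available, namely that River refines Split Cycle (by the Proposition above) and Split Cycle is Pareto-efficient, so any Pareto-dominated alternative is automatically a non-winner. For a self-contained argument I would proceed as follows.

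First I would observe that if all voters rank $a$ above $b$, then $\margin{a,b} = n = |N_\Pref|$, which is the largest value any margin can take. Hence the edge $(a,b)$ appears at the very top of the processing order in Step~1 of River: every edge processed strictly before $(a,b)$ also has margin $n$.

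Next I would analyze what happens when $(a,b)$ is processed. If it is added to \margingraphRV, then $b$ acquires an incoming edge and is therefore not the unique source, so $b \notin \RV(\Pref)$. If it is not added, it was rejected by $(Cy)$ or $(Br)$. I claim $(Cy)$ is impossible: a cycle created by adding $(a,b)$ would require a directed path $b \to p_1 \to \cdots \to a$ already present in \margingraphRV, consisting only of edges processed before $(a,b)$, hence only of edges of margin $n$. But an edge of margin $n$ from $u$ to $v$ means every voter ranks $u$ above $v$; chaining these along the path and invoking transitivity of each voter's linear order, every voter would rank $b$ above $a$, contradicting the hypothesis. Therefore $(a,b)$ can only be rejected by $(Br)$, which means $b$ already has an incoming edge in \margingraphRV, and again $b \notin \RV(\Pref)$.

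Since the argument does not reference the tiebreaker at all, it establishes Pareto efficiency for general profiles. The only slightly delicate point, and the step I would be most careful about, is the claim that no cycle among margin-$n$ edges can be present when $(a,b)$ is processed; this is precisely where transitivity of the individual ballots enters, and it is what rules out the $(Cy)$ branch regardless of how ties are broken.
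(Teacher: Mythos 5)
Your proof is correct and follows essentially the same route as the paper's: the paper likewise argues that the margin-$n$ (Pareto-domination) edges processed before $(a,b)$ form an acyclic graph by transitivity of the individual ballots, so rejection can only come from the branching condition, and in either case $b$ has an incoming edge and is not the River winner. Your noted one-line alternative via refinement of Split Cycle is also acknowledged in the paper's main text, so nothing further is needed.
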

    \begin{proof}
    If every voter ranks $x$ above $y$, we have $\margin{x,y}=n$ and this edge has the highest possible margin. Because all individual preference relations are acyclic, the graph of all such Pareto dominations is acyclic as well. Therefore, when $(x,y)$ is considered for addition to the river diagram, no cycle from $y$ back to $x$ can have been added already. So if $(x,y)$ is not added, some other edge $(z,y)$ must have been added already. In either case, $y\notin \RV(\Pref)$.
    \end{proof}

\section{Proofs for IPDA and IQDA} \label{subsec:app_IPDA}
We present the proofs for \Cref{thm:IPDA_River_TB} and \Cref{thm:IQDA_River} omitted in the paper.

\thmIPDARiverTB*
\begin{proof}[Proof of \Cref{thm:IPDA_River_TB}]
This is literally the same as the proof of \Cref{thm:IQDA_River}, only with all occurrences of the prefix ``quasi-'' removed.
\end{proof}

    For a non-\parecon tiebreaker, one can generate a preference profile for which River with that tiebreaker violates IPDA, as shown in the following example.
    \begin{figure}[ht]
    \centering
    \[
    \begin{array}{clllllllllll}
    12 & 9 & 9 & 9 & 7 & 6 & 5 & 5 & 4 & 2 & 1 & 1 \\
    \hline & \\[-2ex]
    y  & z & b & a & y & z & b & z & z & b & y & y \\
    x  & y & a & z & x & b & a & y & a & z & a & b \\
    b  & x & z & b & a & a & y & x & y & y & x & x \\
    a  & a & y & y & z & y & x & b & x & x & z & a \\
    z  & b & x & x & b & x & z & a & b & a & b & z
    \end{array}
    \]
    \includegraphics[width=0.5\textwidth]{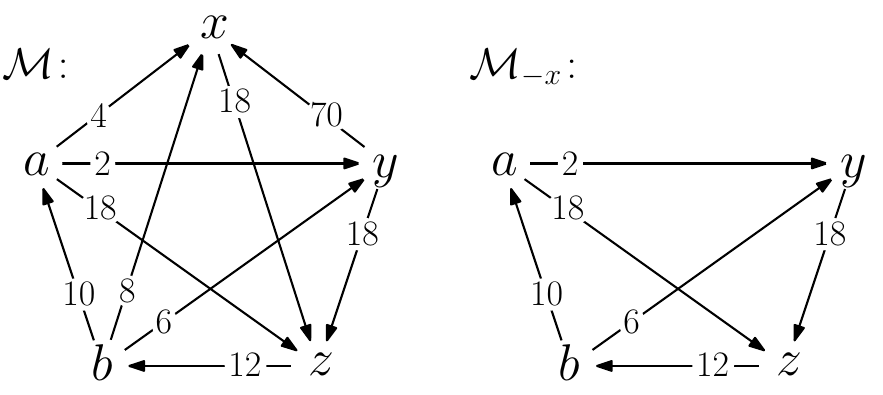}
    \caption{Preference profile and margin graph for  \Cref{ex:River_IPDA_nonparecon}, where a non-\parecon tiebreaker can lead to River violating IPDA.}
    \label{fig:example_IPDA_TB_counterexample}
\end{figure}
    \begin{example} \label{ex:River_IPDA_nonparecon}
        Consider River equipped with a non-\parecon tiebreaker.
        In the profile shown in \Cref{fig:example_IPDA_TB_counterexample} with 5 alternatives and 68 voters, $y$ Pareto-dominates $x$, and $\margin{y,z} = \margin{x,z} = \margin{a,z}=18$. 
        
        In \Pref, the edge $(y,x)$ is processed first and added to the River diagram. Depending on the tiebreaker, one of $(y,z)$, $(x,z)$ or $(a,z)$ is processed and added next, while the others, which would create branchings, are not added.
        This is also the first step of River in $\Pref_{-x}$.
        
        Now, assume the non-consistent tiebreaker orders $(a,z)$ first for $\Pref$, but $(y,z)$ first for $\Pref_{-x}$. We show that this results in $\RV(\Pref_{-x})\neq \RV(\Pref)$, \ie River violating IPDA.
        
        In \Pref, $(z,b)$ is added next, followed by the rejection of $(b,a)$ due to a majority cycle.
        The edges $(b,x)$ and $(b,y)$ are processed next, with $(b,x)$ being excluded due to a branching, and $(b,y)$ being added.
        The final edges $(a,x)$ and $(a,y)$ are both rejected due to branchings, leading to $\RV(\Pref) = {a}$.

        In $\Pref_{-x}$, $(z,b)$ and $(b,a)$ are both added, as $(a,z)$ is not in the diagram. However, both $(b,y)$ and $(a,y)$ create cycles and are rejected, resulting in $\RV(\Pref_{-x}) = {y}$.
    \end{example}

Now we turn to the more permissive notion of quasi-Pareto domination and the proof that River satisfies the corresponding notion of independence IQDA. First, let us define quasi-Pareto-consistent tiebreakers.
\begin{definition}
            A \parecon tiebreaker is called \defstyle{quasi-Pareto-consistent} if for any two edges $(y,x)$, $(y',x)$ with equal margins, of which $y$ but not $y'$ quasi-Pareto-dominate $x$, the tiebreaker ranks $(y,x)$ before $(y',x)$, and whenever $y$ quasi-Pareto-dominates $x$, any edge $(x,z)$ is always ranked after both the edges $(y,z)$ and $(y,x)$.
        \end{definition}
%    Note that quasi-Pareto-consistent tiebreakers exist since quasi-Pareto domination is acyclic. One can thus pick any linear ordering $\prec$ on $A$ that has $y\prec x$ whenever $y$ quasi-Pareto-dominates $x$, and let the tiebreaker rank $(y,x)$ before $(z,w)$ if and only if either (i) $(x,y)$ is a quasi-Pareto domination but $(z,w)$ is not, or (ii) both $(x,y)$ and $(z,w)$ are quasi-Pareto dominations and $(x,y)$ is lexicographically before $(z,w)$ according to $\prec$, or (iii) both $(x,y)$ and $(z,w)$ are not quasi-Pareto dominations and $(x,y)$ is lexicographically before $(z,w)$ according to $\prec$. \jobst{this might not give us a *consistent* tiebreaker!}
% \markus{Definition of quasi Pareto domination is missing!}
    
% \thmIQDARiver*
\thmIQDARiver*
\begin{proof}[Proof of \Cref{thm:IQDA_River}]
        Let \Pref be a preference profile and $x\in A_\Pref$ an alternative that is quasi-Pareto-dominated. From those $y$ that quasi-Pareto-dominate $x$, choose that one whose edges are ranked first by the tiebreaker. %\markus{be more precise. the tiebreaker ranks edges (or unordered pairs...), but not alternatives.}
        Let $E=E(\margingraphRVExt{\Pref})$, $\EnoX=E(\margingraphRVExt{\Pref_{-x}})$ be the edge set of the River diagrams arising with and without $x$, respectively. 
        
        By choice of $y$, $(y,x)$ is processed before any other edge $(z,x)$, hence it cannot form a branching. Because either no ties exist or the tiebreaker is quasi-Pareto-consistent, no edge $(x,z)$ has been added at this point. Hence $(y,x)$ can also not form a cycle with earlier added edges. Hence the edge is added to $E$.
        
        Next, we show that also $(x,z)\notin E$ for all $z\in A\setminus\{x\}$.
        Assume towards contradiction that $(x,z)$ is added to $E$ when processed. Then for all $z'\in A\setminus\{x\}$ we cannot have added $(z',z)$ earlier, in particular not $(y,z)$.
        Since $y$ quasi-Pareto-dominates $x$ and either no ties exist or the tiebreaker is quasi-Pareto-consistent, $(y,z)$ is processed before $(x,z)$. Therefore, the same argument as in \Cref{thm:IPDA_River_uw} holds and $(x,z)\notin E$.

        Finally, we show that for all $z\neq x\neq z'$, we have $(z,z')\in E$ if and only if $(z,z')\in\EnoX$. Since either no ties exist or the tiebreaker is consistent, the order in which these edges are processed is the same for \Pref and for $\Pref_{-x}$. Therefore, the argument of \Cref{thm:IPDA_River_uw} holds here too.
        
        Therefore, all edges in $E$ apart from $(y,x)$ are in $\EnoX$ and $\RV(\Pref)=\RV(\Pref_{-x})$.
    \end{proof}
    
%%=============================================%%
%% For submissions to Nature Portfolio Journals %%
%% please use the heading ``Extended Data''.   %%
%%=============================================%%

%%=============================================================%%
%% Sample for another appendix section			       %%
%%=============================================================%%

%% \section{Example of another appendix section}\label{secA2}%
%% Appendices may be used for helpful, supporting or essential material that would otherwise 
%% clutter, break up or be distracting to the text. Appendices can consist of sections, figures, 
%% tables and equations etc.

\end{document}